%------------------
% juuli 2018
%------------------
\documentclass[a4paper, oneside]{article}
\usepackage[usenames, dvipsnames]{color}
\usepackage{booktabs}
\usepackage[multiple,symbol]{footmisc}
\definecolor{light-gray}{gray}{0.85}
\usepackage{graphicx}
\usepackage{tikz}
\usetikzlibrary{arrows.meta}
\usetikzlibrary{calc}
\usetikzlibrary{shapes.misc}
\usepackage{enumitem}
\usepackage{caption}
\usepackage{subcaption}
\usepackage{changepage}
\usepackage[draft]{hyperref}
\newcommand*{\Cdot}{\raisebox{-0.25ex}{\scalebox{1.25}{$\cdot$}}}
\usepackage[english]{babel}
\usepackage[utf8]{inputenc}
\usepackage{amssymb,amsmath, amsthm, mathtools}
\theoremstyle{Theorem}
\newtheorem{lemma}{Lemma}[section]
\newtheorem{prop}{Proposition}[section]
\newtheorem{theorem}{Theorem}[section]
\newtheorem{cor}{Corollary}[section]

\DeclareMathOperator*{\argmax}{arg\,max}
\DeclareMathOperator*{\argmin}{arg\,min}
\newtheorem{definition}{Definition}[section]
\newtheorem{example}{Example}
\usepackage{thmtools}
%\declaretheorem[style=definition,qed=$\square$,numberwithin=section]{example}

\theoremstyle{definition}

\definecolor{hall}{gray}{0.6}
\usepackage{fullpage}
\DeclarePairedDelimiter{\ceil}{\lceil}{\rceil}

\usepackage[square,numbers]{natbib}
\setlist[description]{font=\normalfont}
\usetikzlibrary{external}
\tikzexternalize
\usetikzlibrary{arrows,automata}
\usepackage{kbordermatrix}
\usepackage{soul}
\def\X{{\cal X}}
\def\Y{{\cal Y}}
\def\Z{{\cal Z}}
\def\B{{\cal B}}
\DeclarePairedDelimiter\floor{\lfloor}{\rfloor}
\def\E{{\mathbb{E}}}
\def\DEF{\stackrel{\mbox{\scriptsize{\normalfont{def}}}}{=}}
\makeatletter
\newsavebox\myboxA
\newsavebox\myboxB
\newlength\mylenA
\usepackage{titlesec}
\usepackage[titletoc,toc,title]{appendix}
\usepackage{colonequals}
\makeatletter
\def\namedlabel#1#2{\begingroup
   \def\@currentlabel{#2}%
   \label{#1}\endgroup
}
\makeatother
\mathtoolsset{centercolon}
\usepackage{accents}

\usepackage{authblk}
\begin{document}
\title{Regenerativity of Viterbi process for pairwise Markov models\footnote{This is a post-peer-review, pre-copyedit version of an article published in Journal of Theoretical Probability. The final authenticated version is available online at: \url{http://dx.doi.org/10.1007/s10959-020-01022-z}} \footnote{The research is supported by Estonian  institutional research funding
IUT34-5 and PRG865.}}
\date{}
\author{Jüri Lember}
\author{Joonas Sova}
\affil{\small University of Tartu, Narva  mnt 18, Tartu, Estonia.\\
Email: \textit{jyril@ut.ee}; \textit{joonas.sova@ut.ee}}
%\thanks{}
\maketitle

\begin{abstract}\noindent For hidden Markov models one of the most popular estimates of the hidden chain is the Viterbi path -- the path maximising the posterior probability. We consider a more general setting, called the pairwise Markov model (PMM), where the joint process consisting of finite-state hidden process and observation process is assumed to be a Markov chain. It has been recently proven that under some conditions the Viterbi path of the PMM can almost surely be extended to infinity, thereby defining the infinite Viterbi decoding of the observation sequence, called the Viterbi process. This was done by constructing a block of observations, called a barrier, which ensures that the Viterbi path goes trough a given state whenever this block occurs in the observation sequence. In this paper we prove that the joint process consisting of Viterbi process and PMM is regenerative. The proof involves a delicate construction of regeneration times which coincide with the occurrences of barriers. As one possible application of our theory, some results on the asymptotics of the Viterbi training algorithm are derived.\\
\textbf{Keywords:} Viterbi path, MAP path, Viterbi training, Viterbi algorithm, Markov switching model, hidden Markov model\end{abstract}

\section{Introduction and preliminaries}
\subsection{Introduction}
%------------------
We consider a Markov chain $Z =\{Z_k \}_{k\geq 1}$ with product
state space $\mathcal{X}\times \mathcal{Y}$, where $\mathcal{Y}$ is
a finite set (state space) and $\mathcal{X}$ is an arbitrary
separable metric space (observation space). Thus, the process $Z$
decomposes as $Z=(X,Y)$, where $X=\{X_k \}_{k\geq 1}$ and $Y=\{Y_k
\}_{k\geq 1}$ are random processes taking values in  $\mathcal{X}$
and $\mathcal{Y}$, respectively. The process $X$ is identified as an
observation process and the process $Y$, sometimes called the {\it
regime}, models  the observations-driving hidden state sequence.
Therefore our general model contains many well-known stochastic
models as a special case: hidden Markov models (HMM), Markov
switching models, hidden Markov models with dependent noise and many
more.  The {\it segmentation} or {\it path estimation} problem
consists of estimating the realization of $(Y_1,\ldots,Y_n)$ given a
realization $x_{1:n}$ of $(X_1,\ldots,X_n)$. A standard estimate is
any path $v_{1:n}\in \mathcal{Y}^n$ having maximum posterior
probability:
$$v_{1:n}=\argmax_{y_{1:n}}P(Y_{1:n}=y_{1:n}|X_{1:n}=x_{1:n}).$$
Any such  path is called {\it Viterbi path} and we are interested in
the behaviour of $v_{1:n}$ as $n$ grows. The study of asymptotics of
Viterbi path is complicated by the fact that adding one more
observation, $x_{n+1}$ can change the whole path, and so it is not
clear, whether there exists a limiting infinite Viterbi path.

It has been recently proven \cite{PMMinf} that under some conditions the infinite Viterbi path exists
for almost every realization $x_{1:\infty}$ of $X$, allowing to define an infinite Viterbi decoding of $X$, called the \textit{Viterbi process.} This was done trough construction of \textit{barriers}. A barrier is a fixed-sized block in the observations $x_{1:n}$ that  fixes the Viterbi path up to
 itself: for every continuation of $x_{1:n}$, the Viterbi path up to
 the barrier remains unchanged. Therefore, if
almost every realization of $X$-process contains
infinitely many barriers, then the infinite Viterbi path exists
a.s.

Having infinitely many barriers is not necessary for
existence of infinite Viterbi path (see \cite[Example 1.2]{PMMinf}), but the
barrier-construction has several advantages. One of them is that it
allows to construct the infinite path {\it piecewise}, meaning that
to determine the first $k$ elements $v_{1:k}$ of the infinite path
it suffices to observe $x_{1:n}$ for $n$ big enough. In the present paper we show that
the barrier construction has another great advantage: namely, the process $(Z,V)=\{(Z_k,V_k)\}_{k \geq 1}$, where $V= \{V_k\}_{k \geq 1}$ denotes the Viterbi process, is under certain conditions regenerative. This is proven by, roughly speaking, applying the Markov splitting method to construct regeneration times for $Z$ which coincide with the occurrences of barriers. Regenerativity of $(Z,V)$ allows
to easily prove limit theorems to understand the asymptotic behaviour of inferences based on Viterbi
paths. In fact, in a special case of HMM this regenerative property has already been known to hold and has found several applications \cite{AV,AVacta,Vsmoothing,Vrisk,iowa}.

The paper is organized as follows. In Subsection \ref{PMC-model}, we
introduce our model and some necessary notation. In Subsection
\ref{sec:VP}, the segmentation problem, infinite Viterbi path,
barriers and many other concepts are introduced and defined. In Subsection \ref{VP} we state the two barrier construction theorems from \cite{PMMinf} (Theorem \ref{HMMTh} concerning HMM and Theorem \ref{thLSC} concerning general PMM) as well as introduce some general Markov chain terminology. In Section \ref{sec:reg} we prove our main regeneration-theorem. In Section \ref{sec:ex} we apply this theorem to more specific cases, namely HMM (Subsection
\ref{subsec:HMM}), discrete ${\cal X}$ (Subsection \ref{subsec:disc}) and linear
Markov switching model (Subsection \ref{subsec:linear}). In Section \ref{sec: VT} we demonstrate an application of our theory by providing some regeneration-based analysis of the \textit{Viterbi training} algorithm.

\subsection{Pairwise Markov model} \label{PMC-model}
Let the observation-space $\mathcal{X}$ be a
separable metric space equipped with its Borel $\sigma$-field
$\B(\X)$. Let the state-space be $\mathcal{Y}=\{1,2,\ldots,|\mathcal{Y}|\}$, where $|\Y|$ is some
positive integer. We denote $\mathcal{Z}= \X \times \Y$, and equip
$\Z$ with product topology $\tau \times 2^\Y$, where $\tau$ denotes
the topology induced by the metrics of $\X$. Furthermore, $\Z$ is
equipped with its Borel $\sigma$-field $\B(\Z)=\B(\X) \otimes 2^\Y$,
which is the smallest $\sigma$-field containing sets of the form $A
\times B$, where $A \in \B(\X)$ and $B \in 2^\Y$. Let $\mu$ be a
$\sigma$-finite measure on $\B(\X)$ and let $c$ be the counting
measure on $2^\mathcal{Y}$. Finally, let
\begin{align*}
q \colon \Z^2 \rightarrow \mathbb{R}_{\geq 0}, \quad (z,z') \mapsto q(z|z')
\end{align*}
be a such a measurable non-negative function that for each $z' \in \mathcal{Z}$ the function $z \mapsto q(z|z')$ is a density with respect to product measure $\mu \times c$.

We define random process $Z =\{Z_k \}_{k\geq 1} = \{(X_k,Y_k) \}_{k \geq 1}$ as a homogeneous Markov chain on the two-dimensional space $\mathcal{Z}$ having the transition kernel $P(z,A)$ defined by
\begin{align*}
&P(z',A)= \int_A q(z|z') \, \mu \times c (dz), \quad z' \in \mathcal{Z}, \quad A \in \B(\Z).
\end{align*}
In other words, $q(z|z')$ is the transition kernel density of $Z$. The marginal processes $\{X_k \}_{k \geq 1}$ and $\{Y_k \}_{k \geq
1}$ will be denoted with $X$ and $Y$, respectively. Following
\cite{pairwise,pairwise2,pairwise3}, we call the process $Z$ a
\textit{pairwise Markov model} (PMM). It should be noted that even though $Z$ is a Markov chain, this doesn't necessarily imply that either of the marginal processes $X$ and $Y$ are Markov chains.

The letter $p$ will be used to denote the various joint and conditional densities. By abuse of notation, the corresponding probability law is
indicated by arguments of $p(\cdot)$, with lower-case $x_k$, $y_k$ and $z_k$ indicating random variables $X_k$, $Y_k$ and $Z_k$, respectively. For example
\begin{align*}
p(x_{2:n}, y_{2:n}|x_1,y_1)= \prod_{k=2}^n q(x_k, y_k|x_{k-1},y_{k-1}),
\end{align*}
where $x_{2:n}=(x_2, \ldots,x_n)$ and $y_{2:n}=(y_2, \ldots,y_n)$. Sometimes it is convenient to use other symbols beside $x_k,y_k,z_k$ as the arguments
of some density; in that case we indicate the corresponding probability law using the equality sign, for example
\begin{align*}
p(x_{2:n}, y_{2:n}|x_1=x,y_1=i)=q(x_2,y_2|x,i)\prod_{k=3}^n q(x_k, y_k|x_{k-1},y_{k-1}), \quad n \geq 3.
\end{align*}
Also $p(z_1)=p(x_1,y_1)$ denotes the initial distribution density of $Z$ with respect to measure $\mu_1 \times c$, where $\mu_1$ is some $\sigma$-finite measure on $\B(\X)$. Thus the joint density of $Z_{1:n}$ is
$p(z_{1:n})=p(z_1)p(z_{2:n}|z_1)$. For every $n \geq 2$ and $i,j \in \Y$ we also denote
\begin{align} \label{pij}
p_{ij}(x_{1:n})=\max_{y_{1:n} \colon y_1=i, y_n=j} \prod_{k=2}^n q(x_k,y_k|x_{k-1},y_{k-1}), \quad x_{1:n} \in \X^n.
\end{align}
Thus
\begin{align*}
p_{ij}(x_{1:n}) =\max_{y_{1:n} \colon y_1=i, y_n=j}p(x_{2:n},y_{2:n}|x_1,y_1).
\end{align*}

If $p(y_2|x_1,y_1)$ doesn't depend on $x_1$, and
$p(x_2|y_2,x_1,y_1)$ doesn't depend on neither $x_1$ nor $y_1$, then
$Z$ is called a \textit{hidden Markov model} (HMM). In that case, denoting
\begin{align*}
&p_{ij}=p(y_2=j|y_1=i),\quad f_{j}(x)=p(x_2=x|y_2=j),
\end{align*}
the transition kernel density factorizes into
\begin{align*}
q(x, j|x',i)&=p(x_2=x|y_2=j,x_1=x',y_1=i)p(y_2=j|x_1=x',y_1=i)\\
&=p_{ij}f_{j}(x).
\end{align*}
Density functions $f_{j}$ are also called the \textit{emission densities}.

If $p(y_2|x_1,y_1)$ doesn't depend on $x_1$, and $p(x_2|y_2,x_1,y_1)$ doesn't depend on $y_1$, then following \cite{HMMbook} we call
$Z$ a \textit{Markov switching model}. Thus HMM's constitute a sub-class of Markov switching models. In the case of Markov switching model, denoting
\begin{align*}
&f_{j}(x|x')=p(x_2=x|y_2=j,x_1=x'),
\end{align*}
the transition kernel density becomes
\begin{align*}
q(x, j|x',i)&=p_{ij}f_{j}(x|x').
\end{align*}
It is easy to confirm that in case of Markov switching model (and therefore also in case of HMM) $Y$ is a homogeneous Markov chain with transition matrix $(p_{ij})$ \cite{pairwise}. Most PMM's
used in practice fall into the class of Markov switching models.
%-----------------
Figure \ref{fig:PMC} depicts the directed dependence graph of HMM,
Markov switching model and the general PMM.
\begin{figure}
    \centering
    \begin{subfigure}[b]{0.45\textwidth}
    \resizebox{\linewidth}{!}{
   \includegraphics[width=\textwidth]{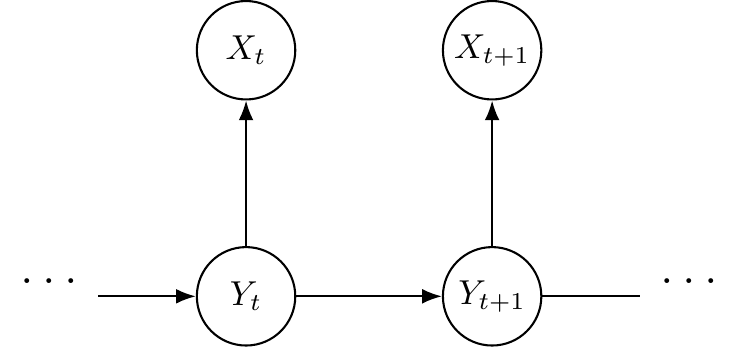}
}
        \caption{Hidden Markov model (HMM)}
    \end{subfigure}
    \quad  %add desired spacing between images, e. g. ~, \quad, \qquad, \hfill etc.
      %(or a blank line to force the subfigure onto a new line)
      \begin{subfigure}[b]{0.45\textwidth}
  \resizebox{\linewidth}{!}{ 
  \includegraphics[width=\textwidth]{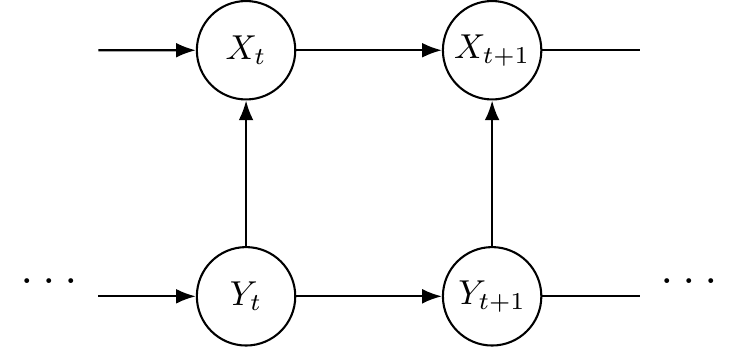}
 }
        \caption{Markov switching model}
    \end{subfigure}

\bigskip
      \begin{subfigure}[b]{0.45\textwidth}
  \resizebox{\linewidth}{!}{  
   \includegraphics[width=\textwidth]{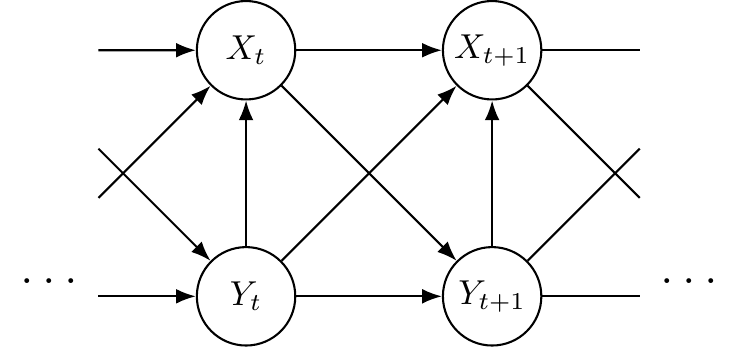}}       
    \caption{Pairwise Markov model (PMM)}
    \end{subfigure}
        \caption{Directed dependence graphs of different types of PMM's}\label{fig:PMC}
\end{figure}

Throughout this paper we write $P_z(A)=P(A|Z_1=z)$ for any event $A$ and $\E_z[S]=\E[S|Z_1=z]$ for any random variable $S$.
%----------------------
%-----------------
\subsection{Infinite Viterbi path}\label{sec:VP} The {\it segmentation problem} in general
consists of guessing or estimating the unobserved realization of
process $Y_{1:n}$ -- the true path  -- given the realization
$x_{1:n}$ of the observation process $X_{1:n}$. Since the true path
cannot be exactly known,  the segmentation procedure merely consists
of finding the path that in some sense is the best approximation.
Probably the most popular estimate is the path with maximum
posterior probability. This path will be denoted with $v(x_{1:n})$
and also with $v_{1:n}$, when $x_{1:n}$ is assumed to be fixed:
$$v_{1:n}=v(x_{1:n})=\argmax_{y_{1:n}}p(y_{1:n},x_{1:n})=\argmax_{y_{1:n}}P(Y_{1:n}=y_{1:n}|X_{1:n}=x_{1:n}).$$
Typically $v_{1:n}$ is called Viterbi or MAP path (also Viterbi or
MAP alignment). It can be found by a dynamic programming algorithm called \textit{Viterbi algorithm} \cite{EM4}. Clearly $v_{1:n}$ might not be unique. For more detailed discussion about the
segmentation problem and the properties of different estimates, we
refer to \cite{intech,seg,chris,Vrisk,peep}. Although these papers
deal with HMM's only, the general theory applies for any model
including PMM's.

To study the
statistical properties of Viterbi path-based inferences, one has to
know the long-run or typical behaviour of random vectors
$v(X_{1:n})$. As argued in \cite{AV}, behaviour of $v(X_{1:n})$ is
not trivial since the $(n+1)^{\mathrm{th}}$ observation can in
principle change the entire alignment based on the previous observations $x_{1:n}$. It might
happen with a positive probability that the first $n$ entries of
$v(x_{1:n+1})$ are all different from corresponding entries of
$v(x_{1:n})$. If this happens again and again, then the first
element of $v(x_{1:n})$ keeps changing as $n$ grows and there is not
such thing as limiting Viterbi path. On the other hand, it could be
intuitively claimed that sometimes there is a positive probability to observe
$x_{1:n}$
 such that regardless of the value of the $(n+1)^{\mathrm{th}}$ observation (provided $n$ is sufficiently large),
the paths $v(x_{1:n})$ and $v(x_{1:n+1})$ agree on first $u$
elements, where $u<n$. If this is true, then no matter what happens
in the future, the first $u$ elements of the paths remain constant.
Provided there  is an increasing unbounded sequence $u_i$
($u<u_1<u_2<\ldots$) such that the path up to $u_i$ remains
constant, one can define limiting or infinite Viterbi path. Let us
formalize the idea. In the following definition $v(x_{1:n})$ is a
Viterbi path and $v(x_{1:n})_{1:t}$ are the first $t$ elements of
the $n$-elemental vector $v(x_{1:n})$.

\begin{definition} Let $x_{1:\infty}$ be a realization of $X$. The sequence $v_{1:\infty}\in {\cal Y}^{\infty}$ is called \emph{infinite Viterbi path of $x_{1:\infty}$} if for any
$t \geq 1$ there exists $m(t)\geq t$ such that
\begin{equation}\label{infty}
v(x_{1:n})_{1:t}=v_{1:t},\quad \forall n\geq m(t).\end{equation}
\end{definition}

Hence $v_{1:\infty}$ is the infinite Viterbi path of
$x_{1:\infty}$ if for any $t$, the first $t$ elements of
$v_{1:\infty}$ are the first $t$ elements of a Viterbi path
$v(x_{1:n})$ for all $n$ big enough $(n\geq m(t))$. In other words,
for every $n$ big enough, there exists at least one  Viterbi path so
that $v(x_{1:n})_{1:t}=v_{1:t}$. For the definition of infinite Viterbi path in the case ${\cal Y}$ is infinite, see
\cite{Ritov,PMMinf}.

As we shall see, for many PMM's the infinite
Viterbi path exists for almost every realization of $X$. However, it is possible to construct a model where almost no realization has an infinite Viterbi path (see \cite[Example 1.1]{PMMinf}).

\paragraph{Nodes} Suppose now $x_{1:\infty}$ is such that infinite
Viterbi path exists. It means that for every time $t$, there exists
time $m(t)\geq t$ such that the first $t$ elements of $v(x_{1:n})$
are fixed as soon as $n\geq m$. Note that if $m(t)$ is such a time,
then $m(t)+1$ is such a time too. Theoretically, the time $m$ might
depend on the whole sequence $x_{1:\infty}$. This means that after
observing the sequence $x_{1:m}$, it is not yet clear, whether the
first $t$ elements of Viterbi path are now fixed (for any
continuation of $x_{1:m}$) or not. In practice, one would not like
to wait infinitely long, instead one prefers to realize that the
time $m(t)$ is arrived right after observing $x_{1:m}$. In this
case, the (random) time $m(t)$ is the stopping time with respect to
the observation process. In particular, it means the following: for
every possible continuation $x_{m+1:n}$ of $x_{1:m}$, the Viterbi
path at time $t$ passes the state $v_t$, let that state be $i$. This
requirement is fulfilled, when the following holds: for every two states
$j,k\in {\cal Y}$
\begin{equation}\label{r-node}
\delta_t(i)p_{ij}(x_{t:m})\geq \delta_t(k)p_{kj}(x_{t:m}),
\end{equation}
where
\begin{align*}
\delta_1(y)=p(x_1,y_1=y), \quad \delta_t(y)=\max_{y_{1:t} \colon
y_t=y}p(x_{1:t},y_{1:t}), \quad t\geq 2.
\end{align*}
and $p_{ij}(\cdot)$ is defined in (\ref{pij}). Indeed,
 there might be several states satisfying
(\ref{r-node}), but the ties can always be broken in favour of the
state $i$, so that whenever $n \geq m$, there is at least one
Viterbi path $v(x_{1:n})$ that passes the state $i$ at time $t$.
 Therefore, if at time $t$, there is a state  $i$ satisfying (\ref{r-node}), then $m$ is the time
$m(t)$ required in (\ref{infty}) and it depends on $x_{1:m}$ only.
%-------
\begin{definition} Let $x_{1:m}$ be a vector of observations.  If inequalities (\ref{r-node}) hold for any pair of states $j$ and $k$, then the
time  $t$  is called  an {\rm $i$-node of order $r=m-t$.}
Time $t$ is called a {\rm strong $i$-node of order $r$}, if it is an $i$-node of order $r$, and
the inequality (\ref{r-node}) is strict for any $j$ and $k\ne i$ for which the left side of the inequality is positive. We
call $t$ a node of order $r$ if for some $i$, it is an $i$-node of
order $r=m-t$.
\end{definition}

In \cite{OnLine} the concept of node is used under the name ``coalescence point'' to describe a memory-efficient modification of the Viterbi algorithm. Our theory justifies the use of such algorithm and could be used to derive bounds on its memory complexity.

\paragraph{Barriers}
Whether a time $t$ is a node of order $r$ or not depends, in general, on the
sequence $x_{1:t+r}$. Sometimes, however, there is some small
block of observations that guarantees the
existence of a node regardless of the other observations.  Let
us illustrate this by an example.

\begin{example}(\cite[Example 3]{PMMinf}) Suppose that   there exists a state $i\in {\cal Y}$
such that for any triplet $y_{t-1},y_t,y_{t+1} \in {\cal Y}$
\begin{equation}\label{tr-barrier}
q(x_t,i|x_{t-1},y_{t-1})q(x_{t+1},y_{t+1}|x_t,i)\geq q(x_t,y_t|x_{t-1},y_{t-1})q(x_{t+1},y_{t+1}|x_t,y_t).
\end{equation}
Then
\begin{align*}
\delta_t(i)q(x_{t+1},y_{t+1}|x_t,i)&=\max_{y'}\delta_{t-1}(y')q(x_t,i|x_{t-1},y')q(x_{t+1},y_{t+1}|x_t,i)\\
&\geq
\max_{y'}\delta_{t-1}(y')q(x_t,y_t|x_{t-1},y')q(x_{t+1},y_{t+1}|x_t,y_t)\\
&=\delta_t(y_t)q(x_{t+1},y_{t+1}|x_t,y_t).\end{align*}
We thus have that $t$ is an
$i$-node of order 1, because for every pair $j,k \in \Y$
$$\delta_t(i)p_{ij}(x_t,x_{t+1})\geq
\delta_t(k)p_{kj}(x_t,x_{t+1}).$$ Whether (\ref{tr-barrier}) holds
or not, depends on triplet $(x_{t-1},x_t,x_{t+1})$. In case of Markov switching model, (\ref{tr-barrier}) is
\begin{align*}
p_{y_{t-1}i}f_{i}(x_t|x_{t-1}) \cdot p_{i y_{t+1}} f_{y_{t+1}}(x_{t+1}|x_t) \geq 
p_{y_{t-1}y_t}f_{y_t}(x_t|x_{t-1}) \cdot p_{y_t y_{t+1}} f_{y_{t+1}}(x_{t+1}|x_t).
\end{align*}
And in a more special case of HMM, (\ref{tr-barrier}) is equivalent
to
\begin{align}\label{1-barHMM}
p_{y_{t-1}i}f_i(x_t) \cdot p_{i y_{t+1}} \geq p_{y_{t-1}y_t}f_{y_t}(x_t) \cdot p_{y_t y_{t+1}} .
\end{align}
\end{example}
%-----------
The inequalities (\ref{1-barHMM}) have very clear meaning -- when
the observation $x_t$ has relatively big probability of  being
emitted from state $i$ (in comparison of being emitted from any
other state),  then regardless of the observations before or after
$x_t$, time $t$ is  a node.  On the other hand, for many models,
there are no such $x_t$ possible, so (\ref{1-barHMM}) is rather an
exception than a rule.

\begin{definition} Given $i\in \Y$, $b_{1:M}$  is called an
{\rm (strong) $i$-barrier of order $r$ and length $M$}, if, for any
$x_{1:\infty}$ with  $x_{m-M+1:m}=b_{1:M}$   for some $m\geq M$,
${m-r}$  is an (strong) $i$-node of order $r$.
\end{definition}

Suppose now that an $i$-barrier of order $r$ occurs in $x_{1:\infty}$ infinitely often. By definition of a barrier, there must exist infinitely many $i$-nodes of
order $r$ in $x_{1:\infty}$, let these nodes be $u_1<u_2<\cdots$. Let
$m \geq u_2+r$.  There must exist a Viterbi path
$v(x_{1:m})$ passing state $i$ at time $u_1$. There also exists a
Viterbi path passing $i$ at time $u_2$. If $v(x_{1:m})$ is unique,
then the path passes $i$ at both times, but if Viterbi path is
not unique and  $u_1$ and $u_2$ are too close to each other, then
there might not be possible to break ties in favour of $i$ at $u_1$
and $u_2$ simultaneously (see \cite[Example 1.4]{PMMinf}). This problem does not occur, if the nodes are strong or if  $u_2 \geq u_1+r$. Indeed, since $u_1$ is an $i$-node of order $r$, then
by definition of $r$-order node, between times $u_1$ and $u_1+r+1$,
the ties can be broken so that whatever state the Viterbi path
passes at time $u_2$, it passes $i$ at time $u_1$. Thus, if the
locations of nodes  $u_1<u_2<\cdots$ are such that $u_k\geq
u_{k-1}+r$ for all $k\geq 2$, it is possible to construct the infinite Viterbi path so that it passes the state $i$ at every time $u_k$. In what follows, when the nodes $u_k$ and $u_{k-1}$ are such
that $u_k\geq u_{k-1}+r$, then the nodes are called
\textit{separated}. Of course, there is no loss of generality in
assuming that the nodes $u_1<u_2<\cdots$ are separated, because from
any non-separated sequence of nodes it is possible to pick a separated
subsequence. Another approach is to enlarge the
barriers so that two barriers cannot overlap and, therefore, are
separated. This is the way barriers are defined in \cite{AVT5}.

\paragraph{Piecewise construction of infinite Viterbi path}
Having infinitely many separated nodes $u_1<u_2<\cdots$ or
order $r$, it is possible to construct the infinite Viterbi path
{\it piecewise}. Indeed, we know that for every $n\geq u_k+r$, there is
a Viterbi path $v_{1:n}=v(x_{1:n})$ such that $v_{u_j}=i$, $j=1,\ldots,k$.
Because of that property  and by optimality principle  clearly the
piece $v_{u_{j-1}:u_j}$ depends on the observations
$x_{u_{j-1}:u_j}$, only. Therefore $v_{1:\infty}$ can be constructed in the
following way: first use the observations $x_{1:u_1}$ to find the
first piece $v_{1:u_1}$ as follows:
$$v_{1:u_1}=\argmax_{y_{1:u_1} \colon y_{u_1}=i}p(x_{1:u_1},y_{1:u_1}).$$
Then use  $x_{u_1:u_2}$ to find the second piece $v_{u_1:u_2}$ as
follows:
$$v_{u_1:u_2}=\argmax_{y_{u_{1}:u_2} \colon y_{u_1}=y_{u_2}=i }p(x_{u_1:u_2},y_{u_1:u_2}),$$
and so on. Finally use $x_{u_k:n}$ to find the last piece
$v_{u_k:n}$ as follows:
$$v_{u_k:n}=\argmax_{y_{u_{k}:n} \colon y_{u_k}=i}p(x_{u_k:n},y_{u_{k}:n}).$$
The last piece $v_{u_k:n}$ might change as $n$ grows, but the rest
of the Viterbi path is now fixed. Thus, if $x_{1:\infty}$ contains
infinitely many nodes, the whole infinite path
can be constructed piecewise.

If the nodes $u_k$ are strong (not necessarily
separated) then regardless of tie-breaking scheme $v(x_{1:n})_{u_k}=i$
for all $k \geq 1$ and $n \geq u_k+r$. Therefore the piecewise construction detailed above is in that case
achieved when the Viterbi estimation is done by a
lexicographic\footnote{Here the term ``lexicographic'' includes both left-to-right and right-to-left lexicographic ordering (the latter is sometimes called co-lexicographic ordering).} tie-breaking scheme induced by
some ordering on $\Y$. Furthermore, under lexicographic ordering, each piece in the piecewise construction is also found by lexicographic ordering. The latter observation will be crucial to our proof of regenerativity.

If the nodes $u_k$ are not strong, then the lexicographic ordering may fail to produce the piecewise construction. However from practice point of view the lexicographic ordering is clearly advantageous, since it can easily be implemented via Viterbi algorithm. For that reason we will from now on almost exclusively focus on strong nodes and strong barriers. Fortunately, this does not seem to impose any significant restrictions.
\subsection{Viterbi process} \label{VP} The notion of infinite Viterbi path of a fixed realization $x_{1:\infty}$ naturally carries over to an infinite Viterbi path of $X$,
called the \textit{Viterbi process}. Formally,
this process is defined as follows.
\begin{definition} A random process $V=\{V_{k}\}_{k \geq 1}$ on space $\Y$ is called a \emph{Viterbi process}, if the event $\{V \mbox{ is not an infinite Viterbi path of }X \}$
is contained in a set of zero probability measure.
\end{definition}
%-----------
If there exists a barrier set $\X^* \subset \X^M$ consisting of $i$-barriers of fixed
order and satisfying $P(X \in \X^* \mbox{ i.o.})=1$, where
\begin{align*}
\{ X \in \mathcal{X}^* \mbox{ i.o.} \} \DEF \bigcap_{k=1}^\infty \bigcup_{l=k}^\infty \{X_{l:l+M-1} \in \mathcal{X}^* \} ,
\end{align*}
then the
Viterbi process can be constructed by applying the piecewise
construction detailed above to the process $X$. Having infinitely many barriers not only ensures the existence of the Viterbi process, but -- as we shall see briefly -- will also provide a rather straightforward route to proving that the Viterbi process is regenerative. As follows we state two theorems which ensure the existence of Viterbi process. The first one only concerns HMM while the latter applies to any PMM.

Recall that in case of HMM $f_i$ are the emission densities with respect to measure $\mu$. Denote
\begin{align*}
G_i=\{x\in \X \:|\:  f_i(x)>0\}, \quad i \in \Y.
\end{align*}
The conditions for the HMM-theorem are the following.
\begin{description}
\item[(HMM1)] \label{HMMcon1} For each state $j\in \Y$
\begin{align*}
&\mu\left(\left\{x\in\mathcal{X} \: | \: f_j(x)p_{ \Cdot j}> \max_{i\in \Y,~i\ne
j}f_i(x)p_{\Cdot i}\right\}\right)>0,\quad\text{where}~p_{\Cdot j} \DEF
\max_{i\in \Y}p_{ij}.
\end{align*}
\item[(HMM2)] There exists a set $C\subset\mathcal{Y}$ such that
\begin{equation*}
\mu \left[ \left(\cap _{i\in C}G_i \right)\setminus \left(\cup _{i\notin C}G_i \right) \right]>0
\end{equation*}
and the sub-stochastic matrix
$\mathbb{P}_C=(p_{ij})_{i,j\in C}$ is primitive in the sense
that $\mathbb{P}^R_C$ consists of only positive elements for some positive
integer $R$.
\end{description}

Note that condition (HMM2) only depends on transition matrix $\mathbb{P} = (p_{ij})$ and the probability laws induced by densities $f_i$. Also, this condition is not very restrictive. It is fulfilled when $\mathbb{P}$ consists of only positive elements. It is also fulfilled when $\mathbb{P}$ is primitive and there exists some $\mu$-positive set such that $f_i$ are positive on that set.
\begin{theorem} \cite[Cor. 4.1]{PMMinf} \label{HMMTh} Let $Z$ be HMM satisfying (HMM1) and (HMM2) and let Markov chain $Y$ be irreducible. There exists a set $\X^* \subset \X^M$ consisting of strong $i$-barriers of fixed order and satisfying $P(X \in \X^* \mbox{ i.o.})=1$.
\end{theorem}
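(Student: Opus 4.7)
The plan is to exhibit an explicit measurable set $\X^* \subset \X^M$ whose elements are strong $i$-barriers of a common order $r$, and then to invoke recurrence of the joint chain $Z$ to conclude that such blocks occur infinitely often along $X$ almost surely. Condition (HMM1) will provide the single-observation confirmations of state $i$ that drive the barrier property, while (HMM2) will confine the competing paths to the cluster $C$ and supply enough mixing to bridge any prefix into the barrier.

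For the construction, fix some $i \in C$ and let $R$ be a positive integer with every entry of $\mathbb{P}_C^R$ strictly positive (such $R$ exists by (HMM2)); take the barrier order $r = R$ and length $M = 2R + 1$. For each $j \in \Y$ set
$$A_j = \bigl\{ x \in \X : f_j(x)\, p_{\Cdot j} > \max_{k \neq j} f_k(x)\, p_{\Cdot k} \bigr\}, \qquad D = \Bigl( \bigcap_{l \in C} G_l \Bigr) \setminus \Bigl( \bigcup_{l \notin C} G_l \Bigr),$$
so that $\mu(A_j) > 0$ by (HMM1) and $\mu(D) > 0$ by (HMM2). I would then take $\X^* = (A_i \cap D)^M$, or, should $\mu(A_i \cap D) = 0$, a mildly more elaborate pattern $b_t \in A_{i_t} \cap D$ for a fixed path $i_1, \ldots, i_M$ inside $C$ passing through $i$ at the central index $R+1$.

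Given any $b_{1:M} \in \X^*$ and any $x_{1:\infty}$ with $x_{m-M+1:m} = b_{1:M}$, I would verify the strong-node inequality
$$\delta_{m-R}(i)\, p_{ij}(x_{m-R:m}) > \delta_{m-R}(k)\, p_{kj}(x_{m-R:m})$$
for every $j \in \Y$ and every $k \neq i$ for which the right-hand side is positive. The defining inequality of $A_i$, iterated across the first $R+1$ positions of the block, multiplicatively forces $\delta_{m-R}(i)$ to dominate $\delta_{m-R}(k)$; the support restriction $D$ eliminates all competitors $k \notin C$ by rendering the right-hand side zero; and primitivity of $\mathbb{P}_C^R$ ensures that every feasible $R$-step transition inside $C$ carries strictly positive weight, so that the slack from $A_i$ is never cancelled by a vanishing transition factor. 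This gives the strong-barrier property uniformly in the unseen prefix $x_{1:m-M}$ and continuation $x_{m+1:\infty}$.

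Finally, to obtain $P(X \in \X^* \text{ i.o.}) = 1$, I would appeal to the fact that under (HMM1)--(HMM2) and irreducibility of $Y$, the chain $Z$ is $\varphi$-irreducible and positive Harris recurrent, together with $P(X_{1:M} \in \X^*) > 0$ under its stationary distribution; a standard ergodic/Borel--Cantelli argument for positive Harris recurrent chains then yields infinite occurrence. The main obstacle is the verification of the strict, uniform barrier inequality in the previous step: one has to absorb the arbitrary prefix weights $\delta_{m-R-1}(\cdot)$ and the arbitrary continuation maxima $p_{kj}(x_{m-R:m})$ into a fixed multiplicative margin supplied by a single block, with strict inequality even against ties, and the simultaneous calibration of (HMM1) and (HMM2) is precisely what makes this possible.
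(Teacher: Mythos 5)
The paper does not prove this statement at all: Theorem \ref{HMMTh} is imported verbatim from \cite[Cor.~4.1]{PMMinf}, so the only available comparison is with the barrier construction carried out there. Your overall strategy --- an explicit product set built from the (HMM1) sets $A_j$ and the (HMM2) cluster set $D$, a multiplicative contraction of the ratio $\max_{k}\delta_t(k)/\delta_t(i)$ driven by the $A_j$-observations, cluster confinement and primitivity of $\mathbb{P}_C$ to control the continuations, and recurrence of $Z$ for the ``i.o.'' claim --- is indeed the strategy of that reference. But as written the proposal has two genuine gaps.

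First, the proposed barrier set may be empty up to null sets. (HMM1) gives $\mu(A_i)>0$ and (HMM2) gives $\mu(D)>0$, but nothing in these hypotheses forces $A_i$ to meet $D$: an observation that strongly favours state $i$ need not lie in $\bigl(\cap_{l\in C}G_l\bigr)\setminus\bigl(\cup_{l\notin C}G_l\bigr)$. So $\mu(A_i\cap D)=0$ is entirely possible, and your fallback pattern $b_t\in A_{i_t}\cap D$ suffers from exactly the same defect for every $t$. A correct construction has to place the $A_j$-observations and the $D$-observations at \emph{different} coordinates of the block and then argue that the confinement achieved on the $D$-coordinates survives the intervening $A_j$-coordinates; this interleaving, and the bookkeeping it requires, is the actual content of the construction in \cite{PMMinf} and is absent here. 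Second, the strict uniform node inequality is asserted rather than proved. Membership $x\in A_i$ only gives $f_i(x)p_{\Cdot i}>\max_{k\ne i}f_k(x)p_{\Cdot k}$ with a margin that can degenerate to nothing as $x$ ranges over $A_i$, so one must first pass to a positive-$\mu$ subset on which the ratio is at most $1-\epsilon$; more seriously, to conclude $\delta_{m-R}(i)p_{ij}(x_{m-R:m})>\delta_{m-R}(k)p_{kj}(x_{m-R:m})$ for a block of \emph{fixed} length $M$ one must bound, uniformly over the block and over all prefixes $x_{1:m-M}$, both the incoming ratio $\max_k\delta_{m-M}(k)/\delta_{m-M}(\cdot)$ and the continuation ratio $p_{kj}/p_{ij}$ over the last $R$ coordinates (which requires the densities $f_l$, $l\in C$, to be bounded above and away from zero on the chosen subsets of $D$). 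You explicitly name this calibration as ``the main obstacle'' and then do not carry it out, so the core of the barrier verification is missing. The final ``i.o.'' step via positive Harris recurrence of $Z$ is fine in spirit once a non-null $\X^*$ is in hand.
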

Theorem \ref{HMMcon1} is not the only result in literature guaranteeing the existence of Viterbi process for HMM, but it is the latest and has the most general conditions (cf. \cite{caliebe1,caliebe2,AVT5,K2}).

In order to state the PMM-theorem we need introduce some general state Markov chain terminology.  Markov chain $Z$ is called
\textit{$\varphi$-irreducible} for some $\sigma$-finite measure
$\varphi$ on $\B(\Z)$, if $\varphi(A)>0$ implies $\sum_{k=2}^\infty
P_z(Z_k \in A)>0$ for all $z \in \Z$. If $Z$ is
$\varphi$-irreducible, then there exists (see \cite[Prop.
4.2.2.]{MT}) a \textit{maximal irreducibility measure} $\psi$ in the
sense that for any other irreducibility measure $\varphi'$ the
measure $\psi$ dominates $\varphi'$, $\psi \succ \varphi'$. The
symbol $\psi$ will be reserved to denote the maximal irreducibility
measure of $Z$.  A point $z\in \Z$ is called \textit{reachable} if for
every open neighbourhood $O$ of $z$,
\begin{align*}
\sum_{k =2}^\infty P_{z'}(Z_k \in O)>0, \quad \forall z' \in \Z.
\end{align*}
For $\psi$-irreducible $Z$, the point $z$ is reachable if and only
if it belongs to the support of $\psi$ \cite[Lemma 6.1.4]{MT}. Since we have equipped space $\Z$ with product topology $\tau \times 2^\Y$, where $\tau$ denotes the topology induced by the metrics of
$\X$, the above-stated definition of reachable point is in fact
equivalent to the following: point $(x,i) \in \Z$ is called
\emph{reachable}, if for every open neighbourhood $O$ of $x$,
\begin{align*}
\sum_{k =2}^\infty P_z(Z_k \in O \times \{i\})>0, \quad \forall
z \in \Z.
\end{align*}
Chain $Z$ is called \textit{Harris recurrent}, if it is
$\psi$-irreducible and $\psi(A)>0$ implies $P_z(Z_k \in A \mbox{
i.o.})=1$ for all $z \in \Z$.

For any $n \geq 2$ define
\begin{align} \label{Yplus}
\mathcal{Y}^+(x)= \{ (i,j) \: | \: p_{ij}(x)>0\}, \quad x \in \X^n.
\end{align}
For any set $A$ consisting of vectors of length $n$ we adopt the following notation:
\begin{align*}
&A_{(k)}=\{x_{k} \:| \: x_{1:n} \in A\}, \quad 1 \leq k \leq n,\\
&A_{(k,l)}=\{x_{k:l} \:| \: x_{1:n} \in A\}, \quad 1 \leq k \leq l \leq n.
\end{align*}
%--------------------
Hence
\begin{align*}
&{\Y}^+(x)_{(1)}=\{i \: | \: \exists j(i) \text{   such that }
p_{ij}(x)>0\},\\
& \mathcal{Y}^+(x)_{(2)}=\{j \: | \: \exists i(j)
\text{  such that } p_{ij}(x)>0\}.
\end{align*} Observe that if $i\in
\mathcal{Y}^+(x)_{(1)}$ and $j\in \mathcal{Y}^+(x)_{(2)}$ then not
necessarily $(i,j)\in \mathcal{Y}^+(x).$

The conditions for the PMM-theorem are the following.
\begin{description} \item[(PMM1)] For any $N \geq 1$ there exist $1<n_{1}< \ldots <n_{2N}$, an open set $B \subset \X^{n_{2N}}$, and $\epsilon>0$ such that denoting $n_0=1$ we have for all $k=1, \ldots, 2N$ and all $x \in B_{(n_{k-1},n_k)}$
\begin{align}
&p_{11}(x) > p_{i1}(x),\quad \forall i \in \mathcal{Y}, \label{PMMineq1}\\
&p_{11}(x) > p_{1i}(x) ,\quad \forall i \in \mathcal{Y}, \label{PMMineq2} \\
& p_{11}(x)(1-\epsilon) > p_{ij}(x), \quad \forall i,j \in \mathcal{Y} \setminus \{1\}, \notag
\end{align}
where either inequalities \eqref{PMMineq1} or \eqref{PMMineq2} could be non-strict. We assume that $\epsilon$ and $B_{(1)}$ are independent of $N$, and that there exists a compact set $K$, which is independent of $N$, such that $B_{(n_{2N})}$ is contained in $K$. Furthermore, we assume that there exists $x^* \in B_{(1)}$ such that $(x^*,1)$ is reachable.
\item[(PMM2)]There exists an open set $E \subset\X^q$, $q\geq 2$, such that $\Y^+ \stackrel{\scriptsize{\mbox{def}}}{=}\Y^+(x)$
 is the same for every $x \in E$ and satisfies the following property: $(i,j) \in \Y^+$ for every $i \in \Y^+_{(1)}$ and $j \in \Y^+_{(2)}$. Furthermore, we assume
 that there exists a reachable point in $E_{(1)}\times \Y^+_{(1)}$.
\end{description}
\begin{theorem} {\cite[Th. 3.1]{PMMinf}} \label{thLSC} Let $\mu$ be strictly
postive\footnote{A measure is called \textit{strictly positive} if
it assigns a positive measure to all non-empty open sets.} and let
for every pair of states $i,j \in \Y$ function $(x,x') \mapsto
q(x,j|x',i)$ be lower semi-continuous and bounded. If $Z$ satisfies
(PMM1) and (PMM2), then there exists $\X^* \subset \X^M$ consisting of strong 1-barriers of fixed order. Moreover, if $Z$ is
Harris recurrent, then $P(X \in \X^* \mbox{ i.o.})=1$.
\end{theorem}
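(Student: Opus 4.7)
The plan is to construct $\X^*$ as an open subset of $\X^M$ obtained by concatenating two types of blocks: \emph{buffer} blocks drawn from open subsets of $E^{\otimes k}$ (supplied by (PMM2)), and a \emph{forcing} block drawn from an open subset of $B$ (supplied by (PMM1)). The buffers exploit the rectangular positivity structure of $\Y^+$ on $E$ to guarantee that paths from every initial state can reach state $1$ with uniformly positive weight, thereby decoupling the barrier from the arbitrary past. The forcing block, via the sub-block decomposition over the intervals $[n_{k-1},n_k]$, exhibits state $1$ as strictly preferred, with a multiplicative penalty of at least $(1-\epsilon)$ per sub-block along which a path deviates from the all-$1$ path at both endpoints. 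Since $N$ in (PMM1) may be taken arbitrarily large, the factor $(1-\epsilon)^{2N}$ can be made small enough to dominate any fixed constants arising elsewhere.

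I would fix $N$ to be chosen later, let $M$ be the sum of a buffer length and the forcing length $n_{2N}$, and place the node $t$ at the position corresponding to an interior separation point such as $n_N$ inside the forcing block. Verifying that $t$ is a strong $1$-node of order $r=m-t$ amounts to proving, for every continuation $x_{m+1:\infty}$, the strict inequality
\[
\delta_t(1)\,p_{1j}(x_{t:m}) \;>\; \delta_t(k)\,p_{kj}(x_{t:m}),\qquad k\neq 1,\ j\in\Y.
\]
The buffer section bounds the history-dependent ratio $\delta_t(k)/\delta_t(1)\leq C$ by a constant $C$ depending only on $q$: boundedness of $q$ controls the numerator, while the connectivity of $\Y^+$ yields a uniform positive lower bound on the denominator by forcing at least one $\delta$-maximising path through state $1$ inside the buffer. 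The forcing section bounds $p_{kj}(x_{t:m})/p_{1j}(x_{t:m}) \leq C'(1-\epsilon)^{2N}$ with $C'$ independent of $N$, by splitting each competing path at the subdivision points $n_{k-1}$ and applying the (PMM1) inequalities factor by factor. Choosing $N$ with $CC'(1-\epsilon)^{2N}<1$ then yields the desired strong inequality. Lower semi-continuity and boundedness of $q$ ensure that these strict inequalities persist on an open neighbourhood in $\X^M$, so $\X^*$ may indeed be taken open.

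For the infinite-occurrence claim, openness of $\X^*$ together with strict positivity of $\mu$ yields $\mu^{\otimes M}(\X^*) > 0$. The reachable points supplied by (PMM1) and (PMM2) allow sequences starting in $\mathrm{supp}(\psi)$ to lead with positive probability into $\X^*\times\Y^M$, so the corresponding $Z$-event has positive $\psi$-measure. Harris recurrence of $Z$ then yields $P_z(X\in\X^*\text{ i.o.})=1$ for every $z$. The main technical obstacle is the uniform bound $\delta_t(k)/\delta_t(1)\leq C$ independent of the observations preceding the barrier: this rests on a careful buffer--forcing interplay, since the buffer must route some $\delta$-maximising path through state $1$ regardless of its starting state, and the open-set structure of $E$ combined with lower semi-continuity of $q$ is essential for this control to hold uniformly across $\X^*$.
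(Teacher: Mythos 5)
This theorem is not proved in the paper at all: it is imported verbatim from \cite[Th.~3.1]{PMMinf}, and the authors explicitly refer the reader there for the construction. So there is no in-paper proof to compare against; what can be said is that your sketch reconstructs the strategy of the cited reference at the level of architecture --- buffer blocks from $E$ taming the dependence of $\delta_t(\cdot)$ on the arbitrary pre-barrier history via the rectangle structure of $\Y^+$, a forcing block from $B$ with the node at an interior subdivision point and a $(1-\epsilon)^{2N}$ penalty accumulated over the $2N$ sub-blocks, the choice of $N$ to beat a uniform constant, openness of $\X^*$ from lower semi-continuity, and the infinitely-often claim from strict positivity of $\mu$, reachability and Harris recurrence. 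This is consistent with how the barrier set is used later in the paper (the product structure \ref{R}\ref{Rprod} with the node at position $M-r$, and the reliance on $B_{(1)}$, $\epsilon$ and the compact set $K$ being independent of $N$).

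Two points in your sketch are genuinely unresolved rather than merely compressed. First, the uniform bound $\delta_t(k)/\delta_t(1)\leq C$ does not follow from the buffer alone: the rectangle property of $\Y^+$ on $E$ only controls ratios $\delta(j)/\delta(j')$ for $j,j'\in\Y^+_{(2)}$, and you still must (a) glue the $E$-block to the $B$-block so that every surviving state of the buffer can enter the forcing block with uniformly positive weight (this is where shrinking the open sets using lower semi-continuity, the reachability hypotheses, and the compact set $K$ --- which your argument never uses --- are actually needed), and (b) route from $\Y^+_{(2)}$ to state $1$ with a uniform positive lower bound, since $1\in\Y^+_{(2)}$ is not guaranteed by (PMM2). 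You name this as ``the main technical obstacle'' but do not supply the mechanism. Second, Harris recurrence gives $P_z(Z_k\in A \mbox{ i.o.})=1$ for single-time visits to a $\psi$-positive set $A\in\B(\Z)$; concluding that the $M$-block pattern $X_{l:l+M-1}\in\X^*$ occurs infinitely often requires an additional step (e.g.\ a conditional Borel--Cantelli argument showing that from a $\psi$-positive set of states the next $M$ observations land in $\X^*$ with probability bounded below), which your final paragraph elides. Neither gap indicates a wrong approach, but both must be filled for the argument to close.
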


For discussion on conditions (PMM1) and (PMM2) and applications of Theorem \ref{thLSC} to specific models we refer the reader to \cite{PMMinf}.
\section{Regenerativity} \label{sec:reg}
For any sequence $u=(u_k)_{k \geq 1}$ and a sequence of times $s=(s_k)_{k \geq 1}$, $1 \leq s_1 < s_2, \ldots$, a \textit{shift operator} $\theta_t$, $t \geq 1$, is defined by $\theta_t(u,s)=((u_k)_{k \geq t}, (s_k-t+1)_{k \geq n(t)})$, where $n(t) = \min \{n \:| \: s_n \geq t  \}$. A process $\{U_k\}_{k \geq 1}$ is called \textit{regenerative} \cite{kalashnikov,thorisson} (in the classic sense), if there exists a sequence of random times $S=\{S_k\}_{k \geq 1}$, $1 \leq S_1< S_2 < \ldots$, called \textit{regeneration times}, such that for each $n \geq 1$
\begin{align*}
&\theta_{S_n}(U,S)\stackrel{d}{=} \theta_{S_1}(U,S),\\
&\theta_{S_n}(U,S) \mbox{ is independent of } (\{U_k\}_{k=1}^{S_n-1}, S_1,\ldots, S_n ).
\end{align*}
Random variables $S_{k}-S_{k-1}$ are called \textit{inter-regeneration times}. Typically one is interested in the case where inter-regeneration times have finite mean, i.e. $\mathbb{E}[S_2-S_1]< \infty$.

Let from now on $V=\{V_k\}_{k \geq 1}$ denote the Viterbi process, whenever it exists, obtained by a lexicographic tie-breaking scheme. The main goal of the present paper is to prove that under some conditions the joint process $(Z,V)=\{(Z_k,V_k)\}_{k \geq 1}$ is regenerative. Regenerativity of a general state space Markov chain is typically proved by applying the so-called \textit{splitting method} \cite{kalashnikov,thorisson}. If the regeneration times of $Z$ obtained by the splitting method are also strong nodes, then $(Z,V)$ is regenerative as well. This is the idea behind our main regeneration theorem below. The theorem is proven under assumptions (R1)-(R3), where (R1) ensures the existence of an appropriate barrier set, (R2) ensures that the splitting method can be applied to obtain suitable regeneration times for $Z$ and (R3) guarantees that the inter-regeneration times have finite expectation.

For any $A \in \B(\Z)$ let $\tau_A$ denote the number of time-steps for $Z$ to reach $A$ after time 1:
\begin{align*}
\tau_A=\min\{n \geq 1 \:|\: Z_{n+1} \in A\}.
\end{align*}
\begin{description} \item[(R1)\namedlabel{R}{(R1)}] There exists a set $\X^*\subset \X^{M}$ satisfying the following conditions:
\begin{enumerate}[label=(\roman*)] \item \label{Rbar} $M \geq 3$ and $\X^*$ consists of strong $1$-barriers of order $r \in \{1, \ldots, M-2\}$;
\item \label{Rio}there exist state $i_0 \in \Y$, such that for every $z \in \Z$
\begin{align*}
&P_z \left(Z_{1:M-r-1} \in \widetilde{\Z} \mbox{ i.o.} \right)=1,
\end{align*}
where $\widetilde{\Z} \DEF \X^*_{(1,M-r-1)} \times (\Y^{M-r-2} \times \{i_0\})$;
\item \label{Rprod} it holds $\X^*=\X^*_{(1,M-r-1)}\times \X^*_{(M-r,M)}$;
\item \label{Rpos} there exists $i_1 \in \Y$ such that for all $x \in \X^*_{(M-r)}$ $$P(X_{M-r+1:M} \in \X^*_{(M-r+1,M)}| Z_{M-r}=(x,i_1))>0.$$
\end{enumerate}
\item[(R2)\namedlabel{minor}{(R2)}] Denote $\Z_0=\X^*_{(M-r-1)} \times \{i_0\}$ and $\Z_1=\X^*_{(M-r)} \times \{i_1\}$. There exists a probability measure $\nu$ on $\mathcal{B}(\Z)$ and $\beta \in (0,1)$ such that $\nu(\Z_1)=1$ and
\begin{align*}
P(z,A) \geq \beta \nu(A),  \quad \forall z \in \Z_0, \quad \forall A \in \mathcal{B}(\Z).
\end{align*}
\item[(R3)\namedlabel{recurr}{(R3)}] It holds $\sup_{z \in \Z_0}\mathbb{E}_{z} [\tau_{\Z_0}]< \infty$.
\end{description}

\begin{theorem}\label{reg} If \ref{R} and \ref{minor} hold, then
\begin{enumerate}[label=(\roman*)] \item $\X^*$ satisfies $P(X \in \X^* \mbox{ i.o.})=1$ and hence the Viterbi process $V$ is well-defined;
\item  the process $(Z,V)=\{(Z_k,V_k)\}_{k \geq 1}$ is regenerative.
\end{enumerate}
 If also \ref{recurr} holds, then the inter-regeneration times have finite mean.
\end{theorem}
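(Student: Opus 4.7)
The plan is to apply the Athreya--Nummelin splitting method based on the minorisation \ref{minor}, using the barrier structure of \ref{R} to upgrade the regeneration times of $Z$ into regeneration times of the joint process $(Z,V)$. The splitting construction is standard: at each visit of $Z$ to $\Z_0$, draw an independent Bernoulli $B\sim\mathrm{Ber}(\beta)$; on $\{B=1\}$ (a ``successful split'') draw the next state of $Z$ from $\nu$, otherwise from the residual kernel $(P(z,\cdot)-\beta\nu(\cdot))/(1-\beta)$. Let $T_n$ be the $n$-th time $k$ at which $Z_{k-(M-r-2):k}\in\widetilde{\Z}$; by \ref{R}(ii) the sequence $(T_n)$ is a.s.\ infinite, and at each $T_n$ we have $Z_{T_n}\in\Z_0$ with the preceding $M-r-1$ observations already lying in $\X^*_{(1,M-r-1)}$.

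For claim (i), at each $T_n$ the split succeeds with conditional probability $\beta$, placing $Z_{T_n+1}\in\Z_1$. By \ref{R}(iv) the quantity $\alpha\DEF\int_{\Z_1}P(X_{M-r+1:M}\in\X^*_{(M-r+1,M)}\mid Z_{M-r}=z)\,\nu(dz)$ is strictly positive (the integrand is positive for $\nu$-a.e.\ $z\in\Z_1$), so conditional on the split one further has $X_{T_n+2:T_n+r+1}\in\X^*_{(M-r+1,M)}$ with probability at least $\alpha$. Combining the past block (in $\X^*_{(1,M-r-1)}$ by definition of $T_n$) with $X_{T_n+1}\in\X^*_{(M-r)}$ (from $Z_{T_n+1}\in\Z_1$) and the future block, and invoking the factorisation \ref{R}(iii), makes $X_{T_n-M+r+2:T_n+r+1}$ a strong $1$-barrier, producing a strong $1$-node at time $T_n+1$. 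L\'evy's conditional Borel--Cantelli applied to these conditionally independent events of probability at least $\beta\alpha$ then gives $P(X\in\X^*\ \text{i.o.})=1$, so $V$ is well-defined.

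For claim (ii), define $S_n$ to be the $n$-th index of the form $T_j+1$ at which both the split succeeds and the future-barrier event $X_{T_j+2:T_j+r+1}\in\X^*_{(M-r+1,M)}$ holds; by (i) the sequence $(S_n)$ is a.s.\ infinite. At each $S_n$ the surrounding strong $1$-barrier forces $V_{S_n}=1$ under any tie-breaking, and by the piecewise construction (together with the existence of further nodes to the right, itself a consequence of (i) applied on the shifted process) the tail $\{V_k\}_{k\geq S_n}$ is a deterministic measurable function of $\{X_k\}_{k\geq S_n}$. Conditional on $S_n=k$, the tail $\{Z_j\}_{j\geq k}$ has the law of the $\nu$-start Markov chain further conditioned on the event $\{X_{1:r}\in\X^*_{(M-r+1,M)}\}$: this law is independent of $n$ and, by the strong Markov property applied at the honest stopping time $T_j+1$ before the forward-looking event is imposed, independent of $\mathcal{F}_{S_n-1}$. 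Paired with the deterministic dependence of the $V$-tail on the $X$-tail, this yields the regenerative property for $(Z,V)$. The main obstacle is precisely that $S_n$ is not a stopping time because the future-barrier event peeks $r$ steps ahead; the clean remedy is to regenerate first at the true stopping time $T_j+1$ (where the strong Markov property applies) and only then to thin by the future-barrier event, so that the conditioning modifies the post-regeneration law in the same way for every $n$ without re-coupling it to the past.

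Finally, under \ref{recurr}, the decomposition $\E_z[\tau_{\Z_0}]=1+\int_{\Z_0^c}\E_{z'}[\tau_{\Z_0}]P(z,dz')$ combined with the minorisation $P(z,\cdot)\geq\beta\nu(\cdot)$ on $\Z_0$ yields $\E_\nu[\tau_{\Z_0}]\leq(\sup_{z\in\Z_0}\E_z[\tau_{\Z_0}]-1)/\beta<\infty$. Hence the expected gap between successive $\Z_0$-visits starting from the post-split state $Z_{S_n}\sim\nu$ is finite, and since each $\Z_0$-visit independently triggers a regeneration with probability at least $\beta\alpha$, a Wald-type argument yields $\E[S_{n+1}-S_n]<\infty$.
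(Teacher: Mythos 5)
Your overall route is the same as the paper's: split the chain at $\Z_0$ using \ref{minor}, locate the split times whose preceding block already lies in $\X^*_{(1,M-r-1)}$, thin further by the forward event $X_{\cdot+1:\cdot+r}\in\X^*_{(M-r+1,M)}$, and use \ref{R}(iii) to assemble a full barrier around each surviving time; you also correctly identify the central obstacle (the final times look $r$ steps into the future, so they are not stopping times) and the correct remedy (regenerate at the honest stopping time first, then thin by the forward event). However, there is one genuine gap: you never separate successive candidate times, and the paper's proof shows this separation is not cosmetic. The paper inserts an intermediate thinning $S^2_k(l)=S^1_{l+(k-1)b}$ with $b=(r+1)\vee(M-r-1)$, which forces consecutive regeneration candidates to be at least $r+1$ and at least $M-r-1$ apart. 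Without it, two of your steps break. First, in claim (i), the forward window $[T_j+2,T_j+r+1]$ attached to one candidate can contain the next candidate time $T_{j'}$ and its split variable; conditioning on the natural filtration at the later candidate then already reveals part of the earlier candidate's forward event (and vice versa), so your events are not conditionally independent and the conditional probability given the relevant past is not bounded below by $\beta\alpha$ — L\'evy's conditional Borel--Cantelli does not apply as stated. Second, and more seriously, in claim (ii) your ``thin after regenerating'' remedy only works if the event $\{S_n=T_j+1\}$ is measurable with respect to $\sigma(W_{1:T_j})$ together with the single forward indicator at $T_j$; this requires that the forward indicators at all \emph{earlier} candidates be determined strictly before $T_j$, i.e.\ that consecutive candidates be at least $r+1$ apart. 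Otherwise the event $\{S_n=T_j+1\}$ constrains observations at and after the next candidate, re-coupling the post-regeneration tail to the selection of $S_{n+1}$, and the thinned times are no longer regeneration times. (The backward windows likewise need the $M-r-1$ separation so that the indicators $I_k$ fall into disjoint regeneration cycles and are i.i.d.) The fix is exactly the paper's $b$-thinning (or an adapted greedy selection of a separated subsequence), and once it is inserted your argument goes through; the finite-mean part of your sketch then matches the paper's Wald/negative-binomial computation.
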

\begin{proof} \emph{Construction of regeneration times.} Denote
\begin{align*}
Q(z,A)=\dfrac{P(z, A)- \beta \nu(A)}{1-\beta}, \quad z \in \mathcal{Z}_0.
\end{align*}
By  \ref{minor} we have that $Q(z,\cdot)$ are probability measures. Hence for all $z \in \Z_0$ the measure $P(z, \cdot)$ is a mixture of measures $\nu$ and $Q(z,\cdot)$:
\begin{align} \label{splitEq}
P(z,A)=\beta \nu(A)+(1-\beta)Q(z,A), \quad z \in \Z_0.
\end{align}
We now apply the standard splitting technique to construct regeneration times for Markov chain $Z$. We define a homogeneous Markov chain $W=\{W_k\}_{k \geq 1}=\{(W_k^{1},W_k^2) \}_{k \geq 1}$ on space $\Z \times \{0,1\}$ as follows. Whenever $W^1_k \notin \Z_0$ we take $W^2_k=0$ and generate $W^1_{k+1}$ according to $P(W^1_k, \cdot)$, and whenever $W^1_k \in \Z_0$ we flip a coin: with probability $1-\beta$ we take $W^2_k=0$ and generate $W^1_{k+1}$ according to $Q(W^1_{k}, \cdot)$ and with probability $\beta$ we take $W^2_k=1$ and generate $W^1_{k+1}$ according to $\nu$ (independently of $W^1_k$).
%
%Formally, the transition kernel $P^W((z,i),\cdot)$ of $W$ is defined as follows:
%\begin{align*}
%&P^W((z,1);(A,1))=\beta \nu(A \cap \Z_0), \quad z \in \Z_0,\\
%&P^W((z,1);(A,0))=(1-\beta)\nu(A \cap \Z_0) + \nu(A \cap \Z_0^\mathsf{c})\, \quad z \in \Z_0,\\
%&P^W((z,0);(A,1))=\beta Q(z,A \cap \Z_0), \quad z \in \Z_0,\\
%&P^W((z,0);(A,0))=(1-\beta) Q(z,A \cap \Z_0)+Q(z, A \cap \Z_0^\mathsf{c}) , \quad z \in \Z_0,\\
%&P^W((z,1);(A,1))=\beta P(z,A \cap \Z_0), \quad z \notin \Z_0,\\
%&P^W((z,1);(A,0))=(1-\beta)P(z,A \cap \Z_0) + P(z,A \cap \Z_0^\mathsf{c})\, \quad z \notin \Z_0.
%\end{align*}
The initial distribution of $W$ is defined by
\begin{align*}
&P(W_1^{1} \in A, W_1^{2}=1)=\beta P(Z_1 \in A \cap \Z_0),\\
&P(W_1^{1} \in A, W_1^{2}=0)=(1-\beta) P(Z_1 \in A \cap \Z_0)+P(Z_1 \in A \cap \Z_0^\mathsf{c}).
\end{align*}
It follows from \eqref{splitEq} that $\{W^{1}_k\}_{k \geq 1} \stackrel{d}{=}Z$. Let
\begin{align*}
&S_1^1=\min\{n+1 \: | \: W_n \in \Z_0 \times \{1\} \},\\
&S_k^1=\min\{n+1 > S_{k-1}^1\: | \: W_n \in \Z_0 \times \{1\} \}, \quad k\geq 2,
\end{align*}
Since $S_k^1-1$ are stopping times for $W$ and $W_{S_k^1-1} \in \Z_0 \times \{1\}$, we have by strong Markov property that $S_k^1$ are regeneration times for $W$ and therefore also for $Z \DEF \{W^{1}_k \}_{k \geq 1}$. In fact, for every $k \geq 1$, $\{Z_{S_k^1+u}\}_{u \geq 0}$ is a Markov chain with initial distribution $\nu$ and transition kernel $P(z, A)$.

Let $b= (r+1) \vee (M-r-1)$, where $\vee$ denotes maximum, and let
\begin{align*}
S_k^2(l)=S^1_{l+(k-1)b}, \quad k\geq 1, \quad l=1,\ldots,b.
\end{align*}
Times $S_k^2(l)$ are regeneration times for $Z$ (since $S_k^1$ are regeneration times), with $\{Z_{S^2_k(l)+u}\}_{u \geq 0}$ being a Markov chain with initial distribution $\nu$ and transition kernel $P(z, A)$ for each $k \geq 1$. But this time we always have some separation between regeneration times:
\begin{align} \label{separ}
S^2_{k}(l)-S^2_{k-1}(l) \geq b, \quad k \geq 2.
\end{align}

Define
\begin{align*}
I_{k}(l)=\mathbb{I}_{\widetilde{\Z}}(Z_{S^2_{k}(l)-M+r+1:S^2_{k}(l)-1}), \quad k \geq 2,
\end{align*}
where $\mathbb{I}$ denotes the indicator function and $\widetilde{\Z}$ is as defined in \ref{R}\ref{Rio}. Since $b \geq M-r-1$, then \eqref{separ} implies that $\{I_k(l)\}_{k \geq 2}$ are i.i.d. sequences. We will show that there exists $l^*$ such that
\begin{align} \label{Ipos}
P(I_2(l^*)=1)>0.
\end{align}
Denote
\begin{align*}
&J_1= \min \left\{n \: |  \:  Z_{n-M+r+2:n} \in \widetilde{\Z} \right\}, \\
&J_k= \min \left\{n > J_{k-1} \: |  \:  Z_{n-M+r+2:n} \in \widetilde{\Z} \right\}, \quad k \geq 2.
\end{align*}
By \ref{R}\ref{Rio} every $J_k$ is almost surely finite. Denote $\{U_k \}_{k \geq 1}= \{ W^2_{J_k} \}_{k \geq 1}$ and note that $\{U_k\}_{k \geq 1}$ is i.i.d. sequence with $P(U_1=1)=\beta$. Indeed, fix $k \geq 1$ and consider the homogeneous Markov chain $((Z_1,W^2_0),(Z_2,W^2_1),(Z_3,W^2_2), \ldots)$, where $W^2_0 \DEF 0$. Since $J_k$ is stopping time for that Markov chain, we have by strong Markov property for every $z \in \Z_0$
\begin{align*}
P(U_k=1|(Z_{J_k},W^2_{J_k-1})=(z,w))=P(W^2_{J_k}=1|(Z_{J_k},W^2_{J_k-1})=(z,w))=\beta.
\end{align*}
Since $Z_{J_k} \in \Z_0$ by definition of $J_k$, then $P(U_k=1)=\beta$. Furthermore, by strong Markov property $U_k$ depends on $U_{1:k-1}$ only through $(Z_{J_k},W^2_{J_k-1})$. But, as we saw, $U_k$ is independent of $(Z_{J_k},W^2_{J_k-1})$ and therefore also of $U_{1:k-1}$. This implies that $\{U_k \}_{k \geq 1}$ is i.i.d. as claimed. Thus
\begin{align*}
\sum_{l=1}^{b}P \left( \sum_{k=1}^\infty I_k(l)=\infty \right)& \geq P \left( \bigcup_{l=1}^{b} \left\{ \sum_{k=2}^\infty I_{k}(l)=\infty  \right\} \right)\\
& = P \left( \sum_{k=M}^\infty \mathbb{I}_{\widetilde{\Z}}(Z_{k-M+r+2:k}) \cdot W^2_k =\infty \right)\\
& = P \left(\sum_{k=1}^\infty U_k=\infty  \right)\\
&=1,
\end{align*}
which implies that $l^*$ satisfying \eqref{Ipos} indeed exists.

Take now
\begin{align*}
&S^3_1=\min\{S^2_n(l^*) \: | \: I_{n}(l^*)=1, n\geq 2 \},\\
&S^3_k=\min\{S^2_n(l^*) >S^3_{k-1} \: | \: I_{n}(l^*)=1, n\geq 2 \}, \quad k\geq 2.
\end{align*}
Note that by \eqref{Ipos} $S_k^3$ are almost surely finite. Again, $S^3_k-1$ are stopping times for $W$ and $W_{S^3_{k}-1} \in \Z_0 \times \{1\}$, so it follows from strong Markov property and \eqref{separ} that $S^3_k$ are regeneration times for $Z$, with $\{Z_{S^3_k+u}\}_{u \geq 0}$ being a Markov chain with initial distribution $\nu$ and transition kernel $P(z, A)$ for each $k \geq 1$. Also, by construction $S^3_k$ satisfies $X_{S^3_k-M+r+1:S^3_k-1} \in \X^*_{(1:M-r-1)}$ for $k \geq 1$.

Finally, we take
\begin{align*}
&S_1=\min \{S^3_n \:| \: X_{S^3_n:S^3_n+r} \in \X^*_{(M-r,M)},n\geq 1\},\\
&S_k=\min \{S^3_n> S_{k-1} \:| \: X_{S^3_n:S^3_n+r} \in \X^*_{(M-r,M)},n\geq 1 \}, \quad k\geq 2.
\end{align*}
Note that by \ref{R}\ref{Rpos} and the assumption $\nu(\Z_1)=1$ of \ref{minor} $S_k$ are almost surely finite. We will now show $S_k$ are regeneration times for $W$. Let
\begin{align*}
H_k=\mathbb{I}_{\X^*_{(M-r,M)}}(X_{S^3_k:S^3_k+r}), \quad k \geq 1,
\end{align*}
and let $N(k)$ be such that $S^3_{N(k)}=S_k$. Hence $N(k)=n$ if and only if $\sum_{l=1}^{n-1} H_l =k-1$ and $H_n=1$. Also note that by \eqref{separ} $S_{k}-S_{k-1} \geq r+1$. We can express the conditional distribution of $W_{S_k:S_k+u}$, $u \geq 1$, given past history:
\begin{align*}
P(W_{S_k:S_k+u} \in A |W_{1:S_k-1}, S_{1:k})&
 =P(W_{S_k:S_k+u} \in A |W_{1:S_k-1})\\
&  =\sum_{n=1}^\infty P(W_{S^3_n:S^3_n+u} \in A |W_{1:S^3_n-1},H_{n}) \cdot \mathbb{I}(N(k)=n) \\
&  =\sum_{n=1}^\infty P(W_{S^3_n:S^3_n+u} \in A |H_{n}) \cdot \mathbb{I}(N(k)=n) \\
&  =\sum_{n=1}^\infty P(W_{S^3_n:S^3_n+u} \in A |H_{n}=1) \cdot \mathbb{I}(N(k)=n) \\
&  =P(W_{S^3_1:S^3_1+u}\in A|H_1=1).
\end{align*}
Here the first equality follows because $\sigma(W_{1:S_k-1}, S_{1:k})=\sigma(W_{1:S_k-1})$; the second equality can be easily confirmed using the fact that $B \cap \{N(k)=n\} \in \sigma(W_{1:S^3_n-1},H_n)$ for every $B \in \sigma(W_{1:S_{k}-1})$ and $n \geq 1$; the third equality follows from the fact that $W_{1:S^3_n-1}$ is independent of $(W_{S^3_n:S^3_n+u},H_n)$ (recall that $S^3_n$ is a regeneration time); the fourth equality follows from the fact that $\{N(k)=n\} \subset \{H_n=1\}$. That $S_k$ are regeneration times for $W$ easily follows. Also, we have $X_{S_k-M+r+1:S_k+r} \in \X^*$ for all $k \geq 1$ by construction and by the assumption $\X^*=\X^*_{(1,M-r-1)}\times \X^*_{(M-r,M)}$ (\ref{R}\ref{Rprod}). Therefore $P(X \in \X^* \mbox{ i.o.})=1$. Also by \ref{R}\ref{Rbar} $S_k$ are strong 1-nodes, and so by piecewise construction with lexicographic tie-breaking inside each piece the process $\{(Z_k,V_k)\}_{k \geq 1}$ is regenerative as claimed.
\bigskip

\noindent \emph{Inter-regeneration times have finite mean.} We will now show that $\mathbb{E}[S_2-S_1]<\infty$. Denote $m=\sup_{z \in \Z_0} \E_z [\tau_{\Z_0}]$, so that by \ref{recurr} $m< \infty$. We write $P_w(A)=P(A|W_1=w)$ for any event $A$ and $\E_w[S]=\E[S|W_1=w]$ for any random variable $S$. First we show that
\begin{align} \label{ineq:Mbeta}
\E_{w}[\tau_{\Z_0}] \leq \dfrac{m}{\beta \wedge (1-\beta)}, \quad \forall w \in \Z_0 \times \{0,1\},
\end{align}
where $\wedge$ denotes minimum. Indeed, note that for any $k \geq 1$ and $z \in \Z_0$
\begin{align*}
P_z(\tau_{\Z_0}=k)&=P_z(W_1^2=0)P_{(z,0)}(\tau_{\Z_0}=k)+P_{z}(W_1^2=1)P_{(z,1)}(\tau_{\Z_0}=k)\\
&=(1-\beta)P_{(z,0)}(\tau_{\Z_0}=k)+\beta P_{(z,1)}(\tau_{\Z_0}=k).
\end{align*}
Therefore, for every $z \in \Z_0$,
\begin{align*}
m &\geq \E_z[\tau_{\Z_0}]\\
&= \sum_{k=1}^\infty k \cdot [(1-\beta)P_{(z,0)}(\tau_{\Z_0}=k)+\beta P_{(z,1)}(\tau_{\Z_0}=k)]\\
&=(1-\beta)\E_{(z,0)}[\tau_{\Z_0}]+\beta \E_{(z,1)}[\tau_{\Z_0}].
\end{align*}
This implies \eqref{ineq:Mbeta}.

Next, we will show that $\mathbb{E}[S^1_2-S^1_1]< \infty$. Denote
\begin{align*}
&T_1= \min \{n \geq 1 \:| \: Z_{S_1^1-1 +n} \in \Z_0\},\\
&T_k= \min \{n > T_{k-1} \:| \: Z_{S_1^1-1 +n} \in \Z_0\}, \quad k\geq 2.
\end{align*}
Exactly like in the case of $\{U_k\}_{k \geq 1}$, it can be shown that $ \{W^2_{S_{1}^1-1+T_k}\}_{k \geq 1}$ is i.i.d. with $P(W^2_{S_{1}^1-1+T_1}=1)=\beta$. Set $N=\min \{n \geq 1 \:| \: W^2_{S_{1}^1-1+T_n} =1\}$ and $T_0=0$. Note that for any $k \geq 1$
\begin{align*}
\E[(T_k-T_{k-1})\mathbb{I}(N \geq k)]&=\E[ \E[(T_k-T_{k-1})\mathbb{I}(N \geq k)|W_{1:S_1^1-1+T_{k-1}}]]\\
&=\E[\mathbb{I}(N \geq k) \E[T_k-T_{k-1}|W_{1:S_1^1-1+T_{k-1}}]]\\
&=\E[ \mathbb{I}(N \geq k) \E_{W_{S_1^1-1+T_{k-1}}}[\tau_{\Z_0}]]\\
&\leq \E \left[ \mathbb{I}(N \geq k) \dfrac{m}{\beta \wedge (1-\beta)} \right]\\
&=\dfrac{m}{\beta \wedge (1-\beta)} \cdot P(N \geq k).
\end{align*}
Here we used the fact that $\{N \geq k\} \in \sigma(W_{1:S_1^1-1+T_{k-1}})$ for second equality, that $S_1^1-1+T_{k-1}$ is a stopping time for $W$ and strong Markov property for third equality, and \eqref{ineq:Mbeta} for the inequality. Thus we have
\begin{align*}
\mathbb{E}[S_2^1-S^1_1] &= \E \left[ \sum_{k=1}^N (T_k-T_{k-1})\right] \\
&=   \E \left[ \sum_{k=1}^\infty \sum_{n=k}^\infty (T_k-T_{k-1}) \mathbb{I}(N=n)\right]\\
&=  \sum_{k=1}^\infty \E \left[  (T_k-T_{k-1}) \mathbb{I}(N \geq k)\right] \\
& \leq \dfrac{m}{\beta \wedge (1-\beta)} \sum_{k=1}^\infty P(N \geq k) \\
&=\dfrac{m}{\beta \wedge (1-\beta)} \E[N].
\end{align*}
Since $N$ has a geometric distribution, then $c_0 \DEF \mathbb{E}[S_2^1-S^1_1]< \infty$, as claimed.

It follows that $\mathbb{E}[S^2_2(l^*)-S^2_1(l^*)]=c_0 b$. Next, let $K$ be such that $S^2_K(l^*)=S^3_2$. Then $K-3$ has negative binomial distribution,
\begin{align*}
K-3 \sim NB(2,1-P(I_2(l^*)=1)).
\end{align*} Therefore
\begin{align*}
\mathbb{E}[S_2^3-S_1^3] \leq \mathbb{E}[S^2_{K}(l^*)-S^2_1(l^*)]= \mathbb{E}[K-1] \cdot c_0 b < \infty,
\end{align*}
where the equality follows by Wald's equation.

Finally, let $L$ be such that $S^3_L=S_2$. Then $L-2$ has negative binomial distribution,
\begin{align*}
L-2 \sim NB(2,1-P(X_{S^3_1:S^3_1+r} \in \X^*_{(M-r,M)})),
\end{align*} and applying Wald's equation again gives the desired result:
\begin{align*}
\mathbb{E}[S_2-S_1] \leq \mathbb{E}[S^3_{L+1}-S^3_1]= \mathbb{E}[L] \cdot \mathbb{E}[K-1]  \cdot c_0 b< \infty.
\end{align*}
\end{proof}

Recall the sequence $n_1 < \ldots < n_N$ and set $B \subset \X^{n_N}$ of (PMM1). A set $A \in \B(\Z)$ is called \textit{regular} when $Z$ is $\psi$-irreducible, if for all $B$ satisfying $\psi(B)>0$, $\sup_{z \in A} \E_z[\tau_B]< \infty$. When conditions of Theorem \ref{thLSC} hold, then regularity of $ (B_{(n_N-1)} \times \Y) $ is sufficient for \ref{R}-\ref{recurr} to hold:
\begin{lemma}\label{PMMLem} If the following conditions are fulfilled, then \ref{R}-\ref{recurr} hold:
$\mu$ is strictly positive, function $(x',x') \mapsto q(x,j|x',i)$ is lower semi-continuous and bounded for all $i, j \in \Y$, $Z$ is Harris recurrent, (PMM1) holds with set $(B_{(n_N-1)} \times \Y) $ being regular, and (PMM2) holds.
\end{lemma}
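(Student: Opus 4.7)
The plan is to construct the barrier set $\X^*$ explicitly from the ingredients of Theorem \ref{thLSC} and (PMM1), and then derive \ref{minor} and \ref{recurr} from lower semi-continuity of $q$ on the compact set $K$ of (PMM1) and the regularity hypothesis, respectively. Throughout I take $i_0 = i_1 = 1$, the state featured prominently in (PMM1).

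Taking $N$ in (PMM1) large enough, Theorem \ref{thLSC} produces $\X^*$ with length $M \geq 3$ and order $r \in \{1,\ldots,M-2\}$, yielding \ref{R}\ref{Rbar}. A close inspection of the construction in \cite{PMMinf} should show that $\X^*$ is (or can be replaced by) a sub-rectangle of the open set $B$ of (PMM1), so it already has the product form $\X^*_{(1,M-r-1)} \times \X^*_{(M-r,M)}$, giving \ref{R}\ref{Rprod}; if it is not literally a product, one restricts to its own induced rectangle, noting that any subset of a set of strong $1$-barriers is again a set of strong $1$-barriers. For \ref{R}\ref{Rpos}, lower semi-continuity combined with strict positivity of $\mu$ and the PMM1 dominance inequalities, which force $q(x',1|x,1) > 0$ on the relevant interior of $B$, give positive probability of the $r$-step transition from $(x,1)$ into $\X^*_{(M-r+1,M)}$ for every $x \in \X^*_{(M-r)}$.

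For \ref{minor}, I set $\Z_0 = \X^*_{(M-r-1)} \times \{1\}$ and $\Z_1 = \X^*_{(M-r)} \times \{1\}$, both projecting into $K$. Lower semi-continuity of $(x,x') \mapsto q(x',1|x,1)$ on the compact set $\overline{\X^*_{(M-r-1)}} \times \overline{\X^*_{(M-r)}}$, together with its strict positivity on an open dense subset of $B$, yields a uniform lower bound $\beta_0 > 0$. Defining $\nu$ as the normalized restriction of $\mu \times c$ to $\Z_1$ and $\beta = \beta_0 \, \mu(\X^*_{(M-r)})$ gives the desired minorization. For \ref{R}\ref{Rio}, the same positivity argument shows that $\widetilde{\Z}$ carries positive $\psi$-measure, and Harris recurrence together with a standard block-shift argument then yields $P_z(Z_{1:M-r-1} \in \widetilde{\Z} \mbox{ i.o.}) = 1$ for every $z$. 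Finally, \ref{recurr} is immediate: $\Z_0 \subseteq B_{(n_N-1)} \times \Y$ after accounting for the position of the barrier in $B$, and $\Z_0$ has positive $\psi$-measure, so regularity of $B_{(n_N-1)} \times \Y$ yields $\sup_{z \in \Z_0} \E_z[\tau_{\Z_0}] \leq \sup_{z \in B_{(n_N-1)} \times \Y} \E_z[\tau_{\Z_0}] < \infty$.

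The principal difficulty I foresee is \ref{R}\ref{Rio}: the barrier property constrains only the Viterbi process $V$, not the true regime $Y$, so forcing $Y_{M-r-1} = 1$ infinitely often requires separately verifying that $q(\cdot,1|\cdot,1) > 0$ on a sufficiently large part of $B$ and that this translates into a $\psi$-positive $\widetilde{\Z}$. A secondary subtlety is \ref{R}\ref{Rprod}, since the product structure of the barrier set may not hold literally without first passing to a sub-rectangle, which must be done carefully to preserve both the strong-node property and compatibility with the remaining sub-conditions of \ref{R}.
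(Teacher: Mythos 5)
Your overall route coincides with the paper's: invoke Theorem \ref{thLSC} to get the barrier set, use lower semi-continuity of $q$ together with strict positivity of $\mu$ to build the minorization \ref{minor} with $\nu$ a normalized restriction of $\mu$ to $\X^*_{(M-r)}\times\{1\}$, and deduce \ref{recurr} from regularity of $B_{(n_N-1)}\times \Y$ once $\psi(\Z_0)>0$ is known. The genuine gap is in the step producing the uniform lower bound $\beta_0$. You claim that lower semi-continuity of the transition density on the compact set $\overline{\X^*_{(M-r-1)}}\times\overline{\X^*_{(M-r)}}$, ``together with strict positivity on an open dense subset of $B$'', forces a positive infimum. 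It does not: a lower semi-continuous function may vanish at a point while being positive on a dense open set around it (the function equal to $1$ off a single point and $0$ at it is l.s.c.), and moreover the closure of $\X^*_{(M-r-1)}\times\X^*_{(M-r)}$ need not lie inside the open region where positivity is known. Compactness of $K$ buys nothing here. The paper obtains the bound the other way around: for some $\epsilon_0>0$ the superlevel set $\{(x',x)\in B' : q(x,1|x',i_0)>\epsilon_0\}$ is non-empty and, by lower semi-continuity, \emph{open}, hence contains a product of open balls $B_0'\times B_1'$; the barrier construction of \cite{PMMinf} is then arranged so that $\X^*_{(M-r-1,M-r)}=B_0'\times B_1'$ exactly, whence $q\geq\epsilon_0$ holds on the relevant set by construction and $\beta=\epsilon_0\,\mu(B_1')>0$ because $B_1'$ is open and $\mu$ strictly positive. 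This also disposes of your worry about \ref{R}\ref{Rprod}: the product structure is imposed at the level of the construction rather than recovered afterwards by passing to a sub-rectangle.

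A secondary discrepancy: the construction yields \ref{R} with $i_1=1$ but with \emph{some} $i_0\in\Y$ that need not equal $1$ (compare the HMM case, where $i_0=\argmax_i p_{i1}$). Your choice $i_0=1$ is not justified; the density that must be bounded below is $q(\cdot,1|\cdot,i_0)$, and \ref{R}\ref{Rio} must be verified for that $i_0$. Your treatment of \ref{R}\ref{Rio} itself (Harris recurrence plus a block argument) is at the same level of detail the paper leaves to \cite{PMMinf}, so I would not count it as a separate gap, but the $i_0$ mismatch propagates into it.
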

\begin{proof} We omit some of the technical details from the proof since they largely concern the specific construction of the barrier set given in \cite{PMMinf}. By Theorem \ref{thLSC} there exists barrier set $\X^* \subset \X^M$ consisting of strong 1-barriers of order $r \in \{1, \ldots, M-2\}$. Using the lower semi-continuity assumption, this barrier set can be constructed so that it satisfies \ref{R} with some $i_0 \in \Y$ and $i_1=1$. Furthermore, denoting $B' =B_{(n_N-1,n_N)}$, we may with no loss of generality assume that $q(x,1|x',i_0)>0$ for all $(x',x) \in B'$. There exists $\epsilon_0>0$ such that set
\begin{align} \label{Bset}
\{(x',x) \in B' \:| \: q(x,1|x',i_0)>\epsilon_0 \}
\end{align}
is non-empty. Since $B'$ is open, then by the lower semi-continuity assumption \eqref{Bset} is also open, so there exist open balls $B'_0,B'_1 \subset \X$ such that $B_0' \times B_1'$ is contained in \eqref{Bset}. Thus $B_0' \times B_1' \subset B'$ and $q(x,1|x',i_0)> \epsilon_0$ for all $(x',x) \in B_0' \times B_1'$. By construction of $\X^*$ (see \cite{PMMinf} for details) we may with no loss of generality assume that $\X^*_{(M-r-1,M-r)}=B_0' \times B_1'$. For any $A \in \B(\Z)$ write $A_1$ for $\{x \:|\: (x,1) \in A\}$. Take $\beta=\epsilon_0 \mu(B_1')$ and
\begin{align*}
\nu(A)=\dfrac{\mu( A_1 \cap B'_1 )}{\mu(B'_1)}, \quad A \in \B(\Z).
\end{align*}
Then for all $x' \in B'_0$
\begin{align*}
P((x',i_0),A) &\geq P((x',i_0),A_1 \times \{1\})\\
&=\int_{A_1} q(x,1|x',i_0) \, \mu(dx)\\
& \geq \epsilon_0 \mu(A_1 \cap B'_1)\\
& =\beta \nu(A),
\end{align*}
and so \ref{minor} holds. Since $\X^*_{(M-r-1)} =B'_0 \subset B_{(n_N-1)}$ and since  \ref{R}\ref{Rio} implies $\psi(\X^*_{(M-r-1)} \times \{i_0\})>0$ (see \cite[Prop. 8.3.7]{MT}), then  \ref{recurr} follows by regularity of $B_{(n_N-1)} \times \Y$.
\end{proof}

Markov chain $Z$ is called \textit{positive}, if its transition kernel admits an invariant probability measure. It is of interest to note that for \ref{R}-\ref{recurr} to hold, $Z$ must be both Harris recurrent and positive:

\begin{lemma} \label{lem:HarrisPos} If \ref{R}-\ref{recurr} hold, then $Z$ is Harris recurrent and positive.
\end{lemma}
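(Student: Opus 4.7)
The approach is to leverage the regenerative structure constructed in the proof of Theorem \ref{reg}: under \ref{R}--\ref{recurr}, for any initial distribution of $Z$, there exist a.s. finite regeneration times $S_1 < S_2 < \cdots$ with $\E[S_2 - S_1] < \infty$, such that for every $k \geq 1$ the post-$S_k$ chain $\{Z_{S_k + u}\}_{u \geq 0}$ starts afresh from $\nu$, and the cycles $(Z_{S_k}, \ldots, Z_{S_{k+1} - 1})$ are i.i.d. Both Harris recurrence and positivity will be derived from this structure.

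To establish $\psi$-irreducibility, I would show that $\nu$ is an irreducibility measure: fix $z \in \Z$ and $A \in \B(\Z)$ with $\nu(A) > 0$; by \ref{R}\ref{Rio} the set $\Z_0$ is hit from $z$ in finite time a.s., and from any $z' \in \Z_0$ the minorization \ref{minor} gives $P(z', A) \geq \beta \nu(A) > 0$, so $\sum_{k \geq 2} P_z(Z_k \in A) > 0$. For the full Harris property I would then argue that each regenerative cycle visits any $\psi$-positive set $A$ with positive probability: otherwise, starting from $\nu$, the chain would miss $A$ almost surely at every time $\geq 1$, which contradicts $\psi$-irreducibility. Since the cycles are i.i.d., the second Borel--Cantelli lemma yields $P(A \text{ is visited in infinitely many cycles}) = 1$, and since $S_1$ is a.s. finite from every starting state $z$, this gives $P_z(Z_k \in A \text{ i.o.}) = 1$ for all $z \in \Z$.

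For positivity, since $\E[S_2 - S_1] < \infty$ by Theorem \ref{reg}, the measure
\begin{align*}
\pi(A) \DEF \frac{1}{\E[S_2 - S_1]}\, \E\!\left[\sum_{k = S_1}^{S_2 - 1} \mathbb{I}(Z_k \in A)\right], \quad A \in \B(\Z),
\end{align*}
is a well-defined probability measure. Invariance $\pi P = \pi$ follows by the standard regenerative cycle shift: because the cycles are i.i.d.,
\begin{align*}
\E\!\left[\sum_{k = S_1}^{S_2 - 1} \mathbb{I}(Z_{k+1} \in A)\right] = \E\!\left[\sum_{k = S_2}^{S_3 - 1} \mathbb{I}(Z_k \in A)\right] = \E\!\left[\sum_{k = S_1}^{S_2 - 1} \mathbb{I}(Z_k \in A)\right],
\end{align*}
while a Fubini/Markov computation applied to the left-hand side (using $\E[\mathbb{I}(Z_{k+1} \in A) \mid \mathcal{F}_k^W] = P(Z_k, A)$ and the fact that $\{S_1 \leq k < S_2\} \in \mathcal{F}_k^W$) identifies it with $\E[S_2 - S_1] \cdot \int \pi(dz) P(z, A)$. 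Hence $Z$ is positive.

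The only genuinely delicate step is showing that a single cycle visits every $\psi$-positive set with positive probability; this is handled by the contradiction with $\psi$-irreducibility sketched above. All remaining ingredients are standard consequences of the regenerative structure together with the finite expected cycle length guaranteed by \ref{recurr}.
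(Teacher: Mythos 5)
Your overall strategy is sound and genuinely different from the paper's. After establishing $\nu$-irreducibility exactly as the paper does (from any $z$ the set $\Z_0$ is reached with positive probability by \ref{R}\ref{Rio}, and then the minorization \ref{minor} is applied), the paper simply cites two standard results of Meyn and Tweedie: Harris recurrence because the small set $\Z_0$ is reached from every state with probability one, and positivity from \ref{recurr} via the regularity criterion for petite sets. You instead rebuild both conclusions from the regenerative structure of Theorem \ref{reg}. Your Harris-recurrence argument is correct: the post-$S_1$ trajectory is a concatenation of i.i.d.\ cycles each started from $\nu$, so if no cycle visited a $\psi$-positive set $A$ with positive probability we would get $P_\nu(Z_u \in A \mbox{ for some } u)=0$, contradicting irreducibility; Borel--Cantelli then gives infinitely many visits, and the argument applies under $P_z$ for every $z$ because $S_1<\infty$ $P_z$-a.s.\ by \ref{R}\ref{Rio} while the cycle law does not depend on $z$.

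The gap is in your verification that $\pi$ is invariant. The Fubini/Markov step uses $\{S_1 \leq k < S_2\} \in \sigma(W_{1:k})$ together with $\E[\mathbb{I}(Z_{k+1}\in A)\mid \sigma(W_{1:k})]=P(Z_k,A)$, and neither holds. First, the final regeneration times of Theorem \ref{reg} are not stopping times: whether a candidate $S^3_n$ is retained as one of the $S_k$ depends on $X_{S^3_n:S^3_n+r}\in \X^*_{(M-r,M)}$, i.e.\ on observations $r$ steps into the future, so $\{S_1\leq k<S_2\}$ is not measurable with respect to the history up to time $k$. Second, even on adapted events, conditioning on the full split-chain history including the coin $W^2_k$ yields $\nu(A)$ or $Q(Z_k,A)$ on $\Z_0$, not $P(Z_k,A)$; one must condition only on $\sigma(W_{1:k-1},Z_k)$. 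Both defects disappear if you run the computation with the first-stage split-chain times $S^1_k$, for which $\{S^1_1\leq k<S^1_2\}\in\sigma(W_{1:k-1})$ and which also have finite mean cycle length by the paper's proof of Theorem \ref{reg}; alternatively, obtain $\pi$ as the almost sure Ces\`{a}ro limit of $\frac{1}{n}\sum_{k\leq n}\mathbb{I}(Z_k\in\cdot)$ via the renewal--reward theorem and observe that a setwise Ces\`{a}ro limit of the marginal laws of a Markov chain is invariant. As written, however, the tower-property step does not go through.
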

\begin{proof} Since \ref{R}\ref{Rio} implies that $P_z(Z_k \in \Z_0 \mbox{ i.o.})=1$ for every $z \in \Z$, then for every $z \in \Z$ there exists $k(z) \geq 2$ such that $P_z(Z_{k(z)} \in \Z_0)>0$. Also note that by \ref{minor}
\begin{align*}
P_z(Z_{z(k)+1} \in A) \geq \int_{\Z_0}P(z',A) \, P_z(Z_{z(k)} \in dz') \geq \beta \nu(A) \cdot P_z(Z_{k(z)} \in \Z_0).
\end{align*}
It follows that $Z$ must be $\nu$-irreducible. That $Z$ is Harris recurrent now follows from \cite[Prop. 9.1.7(ii)]{MT} and that $Z$ is positive follows from \ref{recurr} and \cite[Th. 11.0.1]{MT}.
\end{proof}
\section{Examples} \label{sec:ex}
\subsection{Hidden Markov model} \label{subsec:HMM}
We consider some simple examples of applications of Theorem \ref{reg} to specific models. For HMM no additional conditions than those of barrier construction theorem \ref{HMMTh} are needed for \ref{R}-\ref{recurr} to hold.

\begin{lemma} If $Z$ is HMM, $Y$ is irreducible and (HMM1)-(HMM2) hold, then \ref{R}-\ref{recurr} are satisfied.
\end{lemma}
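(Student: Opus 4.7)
The plan is to verify (R1)--(R3) by exploiting two features of HMM: the transition kernel $P((x,i),\cdot)$ depends only on the hidden state $i$ (not on $x$), and the hidden chain $Y$ is a finite irreducible Markov chain, hence positive recurrent. First, I would invoke Theorem \ref{HMMTh} to obtain a barrier set $\X^* \subset \X^M$ of strong $i$-barriers of some fixed order $r \in \{1,\ldots,M-2\}$ with $P(X \in \X^* \text{ i.o.}) = 1$; relabel states so that $i = 1$, giving (R1)\ref{Rbar}. Inspecting the construction of $\X^*$ in \cite{PMMinf}, in the HMM case $\X^*$ may be taken to be a product $\X^*_{(1)}\times\cdots\times\X^*_{(M)}$ of open sets of positive $\mu$-measure, which gives (R1)\ref{Rprod}.

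The same inspection picks out two states $i_0,i_1 \in \Y$ at coordinates $M-r-1$ and $M-r$ of the barrier with $p_{i_0 i_1}>0$, $f_{i_0}$ positive on a $\mu$-positive subset of $\X^*_{(M-r-1)}$, and $f_{i_1}$ positive on a $\mu$-positive subset of $\X^*_{(M-r)}$. Then (R2) is immediate: since $P((x,i_0),\cdot)$ does not depend on $x$, setting
\begin{align*}
\beta = p_{i_0 i_1} \int_{\X^*_{(M-r)}} f_{i_1}(x)\,\mu(dx), \qquad \nu(A) = \beta^{-1} p_{i_0 i_1} \int_{A \cap \Z_1} f_{i_1}(x)\,\mu(dx)
\end{align*}
yields a probability measure $\nu$ with $\nu(\Z_1)=1$ and $P(z,A) \geq \beta\nu(A)$ for every $z \in \Z_0$, with $\beta > 0$. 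The same HMM memorylessness reduces (R1)\ref{Rpos} to $P(X_{M-r+1:M}\in\X^*_{(M-r+1,M)}\mid Y_{M-r}=i_1)>0$, which follows from the product structure of $\X^*_{(M-r,M)}$ combined with a positive-probability $Y$-path starting at $i_1$ whose states have positive emission mass on each $\X^*_{(k)}$, $k=M-r+1,\ldots,M$.

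For (R1)\ref{Rio} and (R3) I would use positive recurrence of $Y$. Choose a $Y$-path $(y_1,\ldots,y_{M-r-2},i_0)$ of positive $Y$-transition probability such that $f_{y_k}$ has positive $\mu$-mass on $\X^*_{(k)}$ for each $k=1,\ldots,M-r-2$ (existence follows from (HMM2) combined with irreducibility of $Y$). Let $T_1 < T_2 <\cdots$ be a maximal non-overlapping subsequence of the times $n$ at which $Y_{n-M+r+2:n} = (y_1,\ldots,y_{M-r-2},i_0)$; by positive recurrence of $Y$ these are a.s.\ finite with $\sup_z \E_z[T_1]<\infty$ and i.i.d.\ finite-mean increments. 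Given $Y$, the $X$'s are conditionally independent with $X_n \sim f_{Y_n}$, so the events $\{X_{T_k-M+r+2:T_k} \in \X^*_{(1,M-r-1)}\}$ are independent Bernoulli with success probability at least
\begin{align*}
\alpha \DEF \Big(\prod_{k=1}^{M-r-2}\int_{\X^*_{(k)}} f_{y_k}(x)\,\mu(dx)\Big) \cdot \int_{\X^*_{(M-r-1)}} f_{i_0}(x)\,\mu(dx) > 0.
\end{align*}
By Borel--Cantelli, $\{Z_{T_k-M+r+2:T_k}\in\widetilde{\Z}\}$ occurs infinitely often a.s., yielding (R1)\ref{Rio}. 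Since the first success arrives after a $\mathrm{Geom}(\alpha)$ number of $T_k$-attempts, Wald's identity gives $\sup_{z \in \Z_0}\E_z[\tau_{\Z_0}] \leq \E[T_1]/\alpha < \infty$, establishing (R3).

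The main obstacle is the bookkeeping linking the barrier construction of \cite{PMMinf} to our simultaneous choice of $i_0$, $i_1$ and of a positive-probability $Y$-path ending at $i_0$: one must verify that (HMM1)--(HMM2) and irreducibility of $Y$ together force positivity of the relevant emission integrals at the right barrier coordinates and the existence of a positive-probability $Y$-path threading through $i_0$ and $i_1$. Once this structural compatibility is in hand, everything else reduces to routine applications of the strong Markov property, Borel--Cantelli, and Wald's identity.
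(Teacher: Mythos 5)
Your proposal is correct and follows essentially the same route as the paper: both obtain the barrier set from Theorem \ref{HMMTh} and defer the structural facts in \ref{R} (product form, the choice of $i_0$, $i_1$) to the construction in \cite{PMMinf}, and both verify \ref{minor} with the identical minorization $\beta = p_{i_0 i_1}\int_{\X^*_{(M-r)}} f_{i_1}\,d\mu$ and $\nu$ supported on $\X^*_{(M-r)}\times\{i_1\}$, exploiting that $P((x,i_0),\cdot)$ is independent of $x$. The only divergence is cosmetic: where you prove \ref{R}\ref{Rio} and \ref{recurr} by hand via positive recurrence of the finite chain $Y$, conditional independence of the observations, Borel--Cantelli and Wald, the paper instead notes that \ref{R}\ref{Rio} gives $\psi(\Z_0)>0$ and invokes regularity of all sets in $\B(\Z)$ for an HMM with irreducible $Y$ -- the same substance packaged through general Markov chain theory.
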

\begin{proof} The assumptions of the statement guarantee the existence of the barrier set $\X^* \subset \X^M$ consisting of strong $1$-barriers with fixed order $r$ for some labelling of $\Y$. Furthermore, it follows from the construction of $\X^*$ that \ref{R} holds with $i_1=1$ and $i_0=\argmax_{i}p_{i1}$ -- see \cite{PMMinf} for details. Recall that $f_j$ denote the emission densities with respect to measure $\mu$. For $j \in \Y$ denote $P_j(A)= \int_{A} f_{j}(x) \, \mu(dx)$. The barrier set $\X^*$ is constructed in such a way that $P_1(\X^*_{(M-r)})>0$ -- see \cite{PMMinf}. For any $A \in \B(\Z)$ and $j \in \Y$ we write $A_j=\{x \:| \: (x,j) \in A\}$. Taking now
\begin{align*}
\nu(A)=\dfrac{P_1(A_{1} \cap \X^*_{(M-r)})}{P_1(\X^*_{(M-r)})},
\end{align*}
we have that \ref{minor} is fulfilled with $\beta=p_{i_0 1}P_1(\X^*_{(M-r)})$. Indeed, for all $x \in \X$
\begin{align*}
P((x,i_0),A)&=\sum_{j \in \Y}p_{i_0j}P_j(A_j)\\
&\geq p_{i_0 1}P_1(A_{1} \cap \X^*_{(M-r)})\\
&=p_{i_0 1}P_1(\X^*_{(M-r)}) \cdot \nu(A).
\end{align*}
Finally note that since $Z$ is $\psi$-irreducible, where
\begin{align*}
\psi(A \times \{i\})=\int_{A}f_i(x) \, \mu(dx), \quad A \in \mathcal{B}(\X), \quad i \in \Y,
\end{align*}
then \ref{R}\ref{Rio} implies $\psi(\Z_0)>0$ (see \cite[Prop. 8.3.7]{MT}). It is not difficult to show that in case of HMM with irreducible $Y$ every set in $\B(\Z)$ is regular and so \ref{recurr} holds.
\end{proof}

\subsection{Discrete $\X$} \label{subsec:disc}
Consider the case where $\X$ is discrete (finite or
countable) and $Z$ is an irreducible and positive recurrent Markov chain with
(discrete) state space ${\cal Z}'\subset {\cal X}\times {\cal
Y}$. Here the state-space refers to the set of possible values of
$Z$. Note that ${\cal Z}'$ can be a proper subset ${\cal X}\times
{\cal Y}$. Recall the definition of $\Y^+(\cdot)$ \eqref{Yplus}. We assume that the transition kernel $q(z|z')$ is defined on ${\cal Z}'$ and so $(i,x_1)\in {\cal Z}'$ for every $i\in \Y^+(x_{1:q})_{(1)}$.
%--------------
\begin{lemma} \label{discLem} Let $\X$ be discrete and let $Z$ be an
irreducible and positive recurrent Markov chain with the state-space ${\cal
Z}'\subset {\cal X}\times {\cal Y}$. Then the following
conditions ensure that \ref{R}-\ref{recurr} hold.
\begin{enumerate}[label=(\roman*)]
\item \label{discCyc} There exists $n \geq 2$ and $x^*_{1:n} \in \X^n$ such that $(x^*_1,1)\in {\cal Z}'$
and
\begin{enumerate}
\item[1. ] it holds
\begin{align*}
x_1^*=x_n^* \quad \mbox{and} \quad  p_{11}(x^*_{1:n}) > p_{ij}(x^*_{1:n}), \quad \forall i,j \in \mathcal{Y} \setminus \{1\};
\end{align*}
\item[2. ] it holds
\begin{align} \label{discIneq1}
&p_{11}(x^*_{1:n}) > p_{i1}(x^*_{1:n}),\quad \forall i \in \mathcal{Y},\\
&p_{11}(x^*_{1:n}) > p_{1i}(x^*_{1:n}) ,\quad \forall i \in
\mathcal{Y},\label{discIneq2}
\end{align}
where either inequalities \eqref{discIneq1} or inequalities \eqref{discIneq2} could be non-strict.
\end{enumerate}
\item \label{discClust} There exists $q \geq 2$ and a sequence $x_{1:q} \in \X^q$ such that
$\Y^+(x_{1:q})_{(1)}$ is non-empty and $(i,j) \in \Y^+(x_{1:q})$ for
every $i \in \Y^+(x_{1:q})_{(1)}$ and $j \in \Y^+(x_{1:q})_{(2)}$.
\end{enumerate}
\end{lemma}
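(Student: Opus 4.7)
The plan is to reduce to Lemma \ref{PMMLem} by verifying its hypotheses in the discrete setting. I would take $\mu$ to be the counting measure on $\X$, making it strictly positive; every function on $\X^2$ is then trivially lower semi-continuous in the discrete topology, and each $q(\cdot,j|\cdot,i)$ is bounded by $1$ as a probability mass. Harris recurrence is automatic, since an irreducible recurrent chain on a countable state space is Harris recurrent. It then suffices to check (PMM1) (with $B_{(n_{2N}-1)} \times \Y$ regular) and (PMM2).

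For (PMM2) I would take $E = \{x_{1:q}\}$, which is a singleton and hence open in $\X^q$. The set $\Y^+(x)$ is trivially constant on $E$ and satisfies the required product-closure property by hypothesis \ref{discClust}. Fixing any $i \in \Y^+(x_{1:q})_{(1)}$, which is non-empty by assumption, there is some $j$ with $p_{ij}(x_{1:q}) > 0$, so some $y_{2:q}$ satisfies $q(x_2,y_2|x_1,i) > 0$; this forces $(x_1,i) \in \Z'$, and irreducibility of $Z$ on $\Z'$ then makes $(x_1, i)$ reachable, giving the required reachable point in $E_{(1)} \times \Y^+_{(1)}$.

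The main step will be verifying (PMM1) using hypothesis \ref{discCyc}. The idea is to exploit the cyclic property $x_1^* = x_n^*$: for any $N \geq 1$, I would form $\tilde{x} \in \X^{1+2N(n-1)}$ by concatenating $2N$ copies of $x^*_{1:n}$, identifying the terminal entry of each copy with the initial entry of the next (legitimate precisely because $x_n^* = x_1^*$). Setting $n_k = 1 + k(n-1)$ for $k = 1,\ldots,2N$ and $B = \{\tilde{x}\}$ (open as a singleton), one gets $B_{(n_{k-1},n_k)} = \{x^*_{1:n}\}$ for every $k$, so the three displayed inequalities in (PMM1) reduce to those in \ref{discCyc}, with $\epsilon > 0$ chosen once and for all (depending only on $x^*_{1:n}$, not on $N$) to convert $p_{11}(x^*_{1:n}) > p_{ij}(x^*_{1:n})$ into $p_{11}(x^*_{1:n})(1-\epsilon) > p_{ij}(x^*_{1:n})$ for $i,j \in \Y\setminus\{1\}$. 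Both $B_{(1)} = \{x_1^*\}$ and $B_{(n_{2N})} = \{x_n^*\} = \{x_1^*\}$ are independent of $N$, and $K = \{x_1^*\}$ serves as the compact container. Reachability of $(x_1^*,1)$ follows since $(x_1^*,1) \in \Z'$ by \ref{discCyc} and $Z$ is irreducible on $\Z'$.

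Finally, $B_{(n_{2N}-1)}$ is a singleton, so $B_{(n_{2N}-1)} \times \Y$ is a finite subset of $\Z$; for an irreducible positive recurrent chain on a countable state space, finite sets are regular, since Kac's formula gives finite mean return times and hence a uniform bound on $\E_z[\tau_B]$ over any finite set (for every $B$ with $\psi(B)>0$, it contains some state of positive stationary probability). Lemma \ref{PMMLem} then delivers \ref{R}-\ref{recurr}. The only delicate step is the concatenation construction for (PMM1); everything else is bookkeeping around the discrete topology and standard facts about countable-state Markov chains.
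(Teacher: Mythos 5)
Your proposal is correct and follows essentially the same route as the paper: reduce to Lemma \ref{PMMLem} by checking that in the discrete setting $\mu$ can be taken to be counting measure, all functions are lower semi-continuous, Harris recurrence is automatic, finite sets are regular for an irreducible positive recurrent countable chain, and (PMM1)--(PMM2) follow from \ref{discCyc}--\ref{discClust} with $B_{(n_{2N}-1)}=\{x^*_{n-1}\}$ and $B_{(n_{2N})}=\{x^*_n\}$. The only difference is that you spell out the cyclic concatenation argument for (PMM1) and the singleton choice of $E$ for (PMM2), which the paper delegates to the cited reference; your verification of these steps is sound.
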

\begin{proof} It is shown in \cite{PMMinf} that the conditions of the statement imply (PMM1)-(PMM2) with $B_{(n_N-1)}=\{x_{n-1}^*\}$ and $B_{(n_N)}=\{x_n^* \}$. Since $Z$ is positive recurrent on $\Z'$ then with no loss of generality we may assume that all finite sets in $\B(\Z)$ are regular. The statement now follows from Lemma \ref{PMMLem}.
\end{proof}

\subsection{Linear Markov switching model} \label{subsec:linear}
Let $\X= \mathbb{R}^d$ for some $d \geq 1$ and for each state $i \in \Y$ let $\{\xi_k(i)\}_{k \geq 2}$ be an i.i.d. sequence of random variables on $\X$ with $\xi_2(i)$ having density $h_i$ with respect to Lebesgue measure on $\mathbb{R}^d$. We consider the ``linear Markov switching model'', where $X$ is defined recursively by
\begin{align} \label{LMSM}
X_{k}=F(Y_k)X_{k-1}+ \xi_k(Y_k), \quad k \geq 2.
\end{align}
Here $F(i)$ are some $d \times d$ matrices, $Y=\{Y_k\}_{k\geq 1}$ is a Markov chain with transition matrix $(p_{ij})$, $X_1$ is some random variable on $\X$, and random variables $\{\xi_k(i)\}_{k \geq 2, \: i \in \Y}$ are assumed to be independent and independent of $X_1$ and $Y$. Recall that for Markov switching model, the transition density expresses as $q(x,j|x',i)=p_{ij}f_j(x|x')$. For the current model measure $\mu$ is Lebesgue measure on $\mathbb{R}^d$ and $f_j(x|x')=h_j(x-F(j)x')$. When $F(i)$ are zero-matrices, then the linear Markov switching model simply becomes HMM with $h_i$ being the emission densities. When $d=1$, we obtain the ``switching linear autoregression of order 1''. The switching linear autoregressions are popular in econometric modelling, see e.g. \cite{MSM1,MSM2,MSM3,MSM4}.

To show that \ref{R}-\ref{recurr} hold for the linear Markov switching model it suffices to show that the conditions of Lemma \ref{PMMLem} are fulfilled. That $\mu$ is strictly positive is trivially fulfilled in case of Lebesgue measure on $\mathbb{R}^d$. That $(x',x) \mapsto q(x,j|x',i)$ are lower semi-continuous and bounded is fulfilled when $h_i$ are lower semi-continuous and bounded (composition of lower semi-continuous function with continuous function is lower semi-continuous). The conditions for (PMM1) are given by
\begin{lemma} \cite[Lem. 4.2]{PMMinf} \label{LMB3}Let $Z$ be the linear Markov switching model. If the following condition is fulfilled, then $Z$ satisfies (PMM1) with $B_{(n_{N}-1)}$ being an open ball: there exists $x^* \in \X$ such that
\begin{enumerate}[label=(\roman*)] \item \label{LMB3p11}
$p_{11}=\max_{i \in \Y}p_{i1}$;
\item \label{LMB3reach}$(x^*,1)$ is reachable;
\item \label{LMB3cycle}$h_i$ is continuous at $x^*-F(i)x^*$ for all $i \in \Y$, and
\begin{align*}
p_{11}h_1(x^*-F(1)x^*)>p_{ij}h_j(x^*-F(j)x^*), \quad \forall i \in \Y, \quad \forall j \in \Y \setminus \{1\}.
\end{align*}
\end{enumerate}
\end{lemma}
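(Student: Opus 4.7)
The strategy is to place $B$ as a small tube around the constant trajectory $(x^*, x^*, \ldots, x^*)$, exploiting that $x^*$ is a fixed point of the state-$1$ dynamics: with $x^{**} := x^* - F(1)x^*$ one has $F(1)x^* + x^{**} = x^*$, so in state $1$ the noise realisation $x^{**}$ keeps the trajectory pinned at $x^*$.

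First, I would use the strict inequality in (iii) together with the continuity of $h_1$ at $x^{**}$ and of $h_j$ at $x^* - F(j)x^*$ (for every $j$) to produce an $\epsilon_0 > 0$ and an open ball $O \subset \X$ centred at $x^*$ such that for all $(x',x) \in O \times O$, all $i \in \Y$, and all $j \in \Y \setminus \{1\}$,
\begin{equation*}
p_{11}\, h_1(x - F(1)x') \;>\; (1+\epsilon_0)\, p_{ij}\, h_j(x - F(j)x') \;>\; 0.
\end{equation*}
Condition (i), $p_{11} = \max_{i} p_{i1}$, then also gives $p_{11}\, h_1(x - F(1)x') \geq p_{i1}\, h_1(x - F(1)x')$ for every $i \in \Y$.

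Next, for each $N \geq 1$ I would take the unit-length partition $n_k = k+1$, $k = 1, \ldots, 2N$ (with $n_0 = 1$), and set $B = O^{n_{2N}} = O^{2N+1}$, which is open. Because in the linear Markov switching model $p_{ij}(x_1,x_2) = p_{ij}\, h_j(x_2 - F(j)x_1)$, every segment projection $B_{(n_{k-1},n_k)}$ coincides with $O \times O$, and the three inequalities of (PMM1) (with $\epsilon := \epsilon_0/(1+\epsilon_0)$) reduce precisely to the pairwise bounds obtained in the previous paragraph. By construction $B_{(1)} = O$ and $\epsilon$ are independent of $N$; $B_{(n_{2N})} = O$ is contained in the compact closed ball $K := \overline{O}$, also independent of $N$; and $x^* \in B_{(1)}$ is reachable by (ii). Since every one-coordinate projection of $B$ is the open ball $O$, the supplementary assertion that $B_{(n_N-1)}$ is an open ball follows automatically.

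The main technical point is the first step: one needs a single $\epsilon_0$ and a single radius for $O$ that work simultaneously for all finitely many pairs $(i,j) \in \Y \times \Y$, which relies on the finiteness of $\Y$ and on the simultaneous continuity of the finitely many maps $(x',x) \mapsto h_j(x - F(j)x')$ at $(x^*,x^*)$. Everything after that is bookkeeping made possible by the choice of unit-length segments $n_k = k+1$, which confines the required inequalities to a two-point neighbourhood condition that is intrinsically uniform in $N$.
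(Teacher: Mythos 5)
The paper does not actually prove this lemma --- it is imported verbatim as \cite[Lem.~4.2]{PMMinf} and used as a black box --- so there is no in-paper proof to compare against; I can only assess your argument on its own terms, and it is correct. Choosing unit-length segments $n_k=k+1$ and the product neighbourhood $B=O^{2N+1}$ of the constant trajectory reduces (PMM1) to one-step comparisons $p_{ij}(x',x)=p_{ij}h_j(x-F(j)x')$ on $O\times O$, where \ref{LMB3cycle} plus continuity of the finitely many maps $(x',x)\mapsto h_j(x-F(j)x')$ at $(x^*,x^*)$ yields a uniform $\epsilon$ and radius, \ref{LMB3p11} supplies the (permissibly non-strict) family \eqref{PMMineq1}, the $i=1$ row of \ref{LMB3cycle} supplies the strict family \eqref{PMMineq2}, and $K=\overline{O}$, $B_{(1)}=O$, $B_{(n_N-1)}=O$ and \ref{LMB3reach} dispose of the remaining requirements; all of these are genuinely independent of $N$. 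Two cosmetic points: the asserted lower bound $(1+\epsilon_0)\,p_{ij}h_j(x-F(j)x')>0$ need not hold (e.g.\ when $p_{ij}=0$), and is not needed --- what you do need, and what continuity does give, is $p_{11}h_1(x-F(1)x')>0$ on $O\times O$ so that the multiplicative $(1-\epsilon)$ bound is non-vacuous; and the ``fixed point of the state-$1$ dynamics'' motivation, while a good heuristic for why such a tube should exist, plays no logical role in the verification.
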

In what follows, for any $x \in \X$ and $r>0$ let $B(x,r)$ denote an open ball in $\X$ with respect to 2-norm with center point $x$ and radius $r>0$. The conditions for (PMM2) are given by

\begin{lemma} \cite[Lem. 4.1]{PMMinf}\label{LMB1B2Lemma} Let $Z$ be the linear Markov switching model. If the following conditions are fulfilled, then $Z$ satisfies (PMM2).
\begin{enumerate}[label=(\roman*)]
\item  \label{LMprimit}There exists set $C\subset \Y$ and $r>0$ such that the following two conditions are satisfied:
\begin{enumerate}
\item[1.] for $x \in B(0,r)$, $h_i(x)>0$ if and only if $i \in C$;
  \item[2.] the sub-stochastic matrix $\mathbb{P}_C= (p_{ij})_{i,j\in C}$ is primitive, i.e. there exists $R \geq 1$ such that matrix $\mathbb{P}_C^R$ has only positive elements.
\end{enumerate}
\item \label{LMreach} Denote $\Y_C=\{i \in \Y \:| \: p_{ij}>0, \: j\in C\}$. There exists $i_E \in \Y_C$ such that $(0,i_E)$ is reachable.
\end{enumerate}
\end{lemma}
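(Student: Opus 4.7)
The plan is to construct the open set $E$ required by (PMM2) as a power of a small open ball around the origin, using condition (i) to force every non-trivial transition path to pass through states in $C$, and then using primitivity together with (ii) to identify $\Y^+(x)$ explicitly and exhibit a reachable point.

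First, I would set $q = R+2$ and choose $\epsilon \in (0,r)$ small enough that $(1 + \|F(y)\|)\epsilon \leq r$ for every $y \in \Y$, where $\|F(y)\|$ denotes the operator norm of $F(y)$ acting on $\mathbb{R}^d$. With this choice, for any $x_{1:q} \in B(0,\epsilon)^q$, any index $k \in \{2,\ldots,q\}$, and any $y \in \Y$, the vector $x_k - F(y)x_{k-1}$ lies in $B(0,r)$. By condition \ref{LMprimit}.1, the density $h_y$ is then positive at $x_k - F(y)x_{k-1}$ if and only if $y \in C$. I would then take $E = B(0,\epsilon)^q$, which is open in $\X^q$.

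Next I would identify $\Y^+(x)$ for $x = x_{1:q} \in E$. Since $q(x_k, y_k | x_{k-1}, y_{k-1}) = p_{y_{k-1}y_k} h_{y_k}(x_k - F(y_k)x_{k-1})$, one has $p_{ij}(x_{1:q}) > 0$ if and only if there exists $y_{1:q}$ with $y_1 = i$, $y_q = j$, $p_{y_{k-1}y_k} > 0$ for all $k = 2, \ldots, q$, and $y_k \in C$ for each $k = 2, \ldots, q$. The constraint $y_q = j \in C$ forces $j \in C$; the constraint that some $y_2 \in C$ satisfies $p_{i y_2} > 0$ forces $i \in \Y_C$. Conversely, given any $i \in \Y_C$ and any $j \in C$, pick $y_2 \in C$ with $p_{iy_2} > 0$ and apply primitivity of $\mathbb{P}_C$: since $\mathbb{P}_C^{q-2} = \mathbb{P}_C^R$ has all entries positive, a sub-path $y_2, y_3, \ldots, y_q = j$ entirely inside $C$ with positive transition probabilities exists. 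Hence $\Y^+(x) = \Y_C \times C$ for every $x \in E$, which is both constant on $E$ and of product form, verifying the structural requirement of (PMM2).

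Finally, $E_{(1)} = B(0,\epsilon)$ contains the origin and $\Y^+_{(1)} = \Y_C$ contains $i_E$ by hypothesis \ref{LMreach}; the same hypothesis asserts that $(0,i_E)$ is reachable, so a reachable point lies in $E_{(1)} \times \Y^+_{(1)}$, completing (PMM2). The only step with any technical content is the uniform bound on $\|x_k - F(y)x_{k-1}\|$ keeping it inside $B(0,r)$, and this is routine given boundedness of the operator norms of the finitely many matrices $F(y)$; every other step is immediate from the primitivity and reachability hypotheses.
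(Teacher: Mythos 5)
Your argument is correct and complete: the choice $q=R+2$ and $\epsilon$ with $(1+\|F(y)\|)\epsilon\le r$ forces every increment $x_k-F(y)x_{k-1}$ into $B(0,r)$, whence $\Y^+(x)=\Y_C\times C$ on $E=B(0,\epsilon)^q$ by condition (i) and primitivity, and $(0,i_E)$ supplies the required reachable point. The paper itself states this lemma by citation to \cite[Lem.~4.1]{PMMinf} without reproducing a proof, and your verification is essentially the argument used there.
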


Conditions \ref{LMprimit} and \ref{LMreach} are not very restrictive. For example, when all the elements of $\mathbb{P}=(p_{ij})$ are positive, then \ref{LMprimit} is fulfilled if densities $h_i$ are either positive around 0 or zero around 0 and there exists at least one $j \in \Y$ such that $h_j$ is positive around 0. If densities $h_i$ are all positive around 0, then \ref{LMprimit} is fulfilled when $\mathbb{P}$ is primitive with $C=\Y$. If $h_i$ are positive everywhere and $Y$ is irreducible, then all points in $\Z$ are reachable and so \ref{LMreach} trivially holds.

For $x \in \X$ we denote with $\| x \|_1$ the 1-norm of $x$, and for a $d \times d$ matrix $A$ we denote with $\| A \|_1$ the 1-norm of matrix $A$, that is $\|A\|_1$ is the maximum absolute column sum of $A$.  The following lemma addresses Harris recurrence of $Z$ and regularity of sets in $\B(\Z)$. This can be proved analogously to Lemma 4.3 in \cite{PMMinf} by using \cite[Th. 11.3.11]{MT} instead of \cite[Th. 9.1.8]{MT}.

\begin{lemma} \label{LMHarris} Let $Z$ be the linear Markov switching model. If the following conditions are fulfilled, then $Z$ is Harris recurrent and every compact set in $\B(\Z)$ is regular:
\begin{enumerate}[label=(\roman*)] \item $Z$ is $\psi$-irreducible and support of $\psi$ has non-empty interior;
\item $\mathbb{E}[\|\xi_2(i) \|_1]< \infty$ for all $i \in \Y$;
\item $\max_{i \in \Y}\sum_{j \in \Y}p_{ij}\|F(j)\|_1<1$.
\end{enumerate}
\end{lemma}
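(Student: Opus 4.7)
The plan is to establish a Foster-Lyapunov drift condition toward a compact set and then invoke \cite[Th. 11.3.11]{MT}. Taking the test function $V(x,i) = 1 + \|x\|_1$ and using the recursion \eqref{LMSM} together with submultiplicativity of the matrix $1$-norm and the triangle inequality, one obtains for any $z=(x,i) \in \Z$
\begin{align*}
\E_z[V(Z_2)] = 1 + \E_z[\|X_2\|_1] \leq 1 + \sum_{j \in \Y} p_{ij} \bigl( \|F(j)\|_1 \|x\|_1 + \E[\|\xi_2(j)\|_1] \bigr).
\end{align*}
Setting $\lambda := \max_i \sum_j p_{ij}\|F(j)\|_1$, which by (iii) is strictly less than $1$, and $b_0 := \max_i \sum_j p_{ij}\E[\|\xi_2(j)\|_1]$, which by (ii) is finite, the right-hand side is at most $\lambda V(z) + b_1$ with $b_1 = 1 - \lambda + b_0$. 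Choosing $M$ large enough that $(1-\lambda)V(z) \geq b_1 + 1$ on $\{V > M\}$, this rearranges to the standard drift inequality
\begin{align*}
\E_z[V(Z_2)] \leq V(z) - 1 + b\, \mathbb{I}_C(z), \qquad z \in \Z,
\end{align*}
where $C = \{V \leq M\}$ is compact (a closed Euclidean ball in $\X$ times the finite set $\Y$).

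To convert this drift inequality into regularity of compact sets via \cite[Th. 11.3.11]{MT}, the set $C$ must be petite. This is where hypothesis (i) and the earlier standing assumptions of lower semi-continuity and boundedness of $q$ (imposed at the level of Theorem \ref{thLSC}) enter: together they force $Z$ to be a $T$-chain in the sense of \cite[Ch. 6]{MT}, and for a $\psi$-irreducible $T$-chain whose $\mathrm{supp}(\psi)$ has non-empty interior every compact set is petite. In particular $C$ is petite.

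Once the drift inequality and the petiteness of $C$ are in place, \cite[Th. 11.3.11]{MT} yields simultaneously that $Z$ is positive Harris recurrent and that every sublevel set $\{V \leq M'\}$ is regular. Since the test function $V(x,i) = 1 + \|x\|_1$ is proper, every compact subset of $\Z$ sits inside some $\{V \leq M'\}$ and is therefore also regular, which is the desired conclusion. The main obstacle I anticipate is not the drift computation (which is essentially forced once $V = 1 + \|\cdot\|_1$ is chosen) but rather the petiteness step: one must verify in detail that the lower semi-continuity of the emission densities $h_i$ yields the $T$-chain property for the linear Markov switching model on the product topology of $\X \times \Y$, and that hypothesis (i) genuinely promotes compactness to petiteness. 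This is the part that closely parallels \cite[Lem. 4.3]{PMMinf} and is the reason the author advertises an analogous proof, swapping \cite[Th. 9.1.8]{MT} (which gives only Harris recurrence) for \cite[Th. 11.3.11]{MT} (which additionally delivers regularity of sublevel sets of $V$).
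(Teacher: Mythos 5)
Your proposal is correct and matches the route the paper itself indicates: the paper gives no detailed proof but states that the result follows analogously to \cite[Lem.\ 4.3]{PMMinf} by replacing \cite[Th.\ 9.1.8]{MT} with \cite[Th.\ 11.3.11]{MT}, which is precisely your drift-plus-petiteness argument with the test function $V(x,i)=1+\|x\|_1$ (conditions (ii)--(iii) giving the drift, condition (i) together with the standing lower semi-continuity of the kernel giving the $T$-chain property and hence petiteness of compact sets). The only cosmetic slip is that $\{V\leq M\}$ is a closed $\ell^1$-ball times $\Y$ rather than a Euclidean ball, which of course changes nothing.
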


To recapitulate: if $h_i$ are lower semi-continuous and bounded and the conditions of Lemmas \ref{LMB3}, \ref{LMB1B2Lemma} and \ref{LMHarris} are fulfilled, then the conditions of Lemma \ref{PMMLem} are satisfied and therefore \ref{R}-\ref{recurr} hold. Applying this fact to the case where $h_i$ are Gaussian yields
\begin{cor} \label{GLMcor} Let $Z$ be the linear Markov switching model, with densities $h_i$ being Gaussian with respective mean vectors $\mu_i$ and positive definite covariance matrices $\Sigma_i$. If the following conditions are fulfilled, then \ref{R}-\ref{recurr} hold.
\begin{enumerate}[label=(\roman*)]\item \label{GLMprim} Matrix $\mathbb{P}=(p_{ij})$ is primitive, i.e. there exists $R$ such that $\mathbb{P}^R$ consists of only positive elements.
\item \label{p11Max} It holds $p_{11}=\max_{i \in \Y}p_{i1}$.
\item \label{GLMcycle}Matrix $\mathbb{I}_d - F(1)$, where $\mathbb{I}_d$ denotes the identity matrix of dimension $d$, is non-singular, and for all $i \in \Y$ and $j \in \Y \setminus \{1\}$
\begin{align*}
(\mathbb{I}_d-F(j))(\mathbb{I}_d-F(1))^{-1}\mu_1 \in H_{ij}^{\mathsf{c}},
\end{align*}
where we denote $H_{ij}=\emptyset$, if $p_{ij}=0$ or $\frac{p_{11}\sqrt{|\Sigma_j |}}{p_{ij} \sqrt{|\Sigma_1|}}>1$, and
\begin{align*}
H_{ij} =
\left\{x \in \mathbb{R}^d \: | \: (x- \mu_j)^\top \Sigma_j^{-1}(x- \mu_j) \leq  -2 \ln \left(\dfrac{p_{11}\sqrt{|\Sigma_j |}}{p_{ij} \sqrt{|\Sigma_1|}} \right)  \right\}
\end{align*}
otherwise
\item \label{GLMHarris}It holds $\max_{i \in \Y}\sum_{j \in \Y}p_{ij}\|F(j)\|_1<1$.
\end{enumerate}
\end{cor}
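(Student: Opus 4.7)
The plan is to reduce the statement to Lemma \ref{PMMLem} by checking its hypotheses, which in the present setting means invoking Lemmas \ref{LMB3}, \ref{LMB1B2Lemma}, and \ref{LMHarris}. Strict positivity of Lebesgue measure on $\mathbb{R}^d$ is trivial, and $(x',x) \mapsto q(x,j|x',i) = p_{ij}h_j(x-F(j)x')$ inherits continuity (hence lower semi-continuity) and boundedness from the Gaussian $h_j$. Hence only (PMM1) with regular $B_{(n_N-1)}\times \Y$, (PMM2), and Harris recurrence remain.

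The central step is (PMM1) via Lemma \ref{LMB3}, which I will apply at the fixed point $x^* = (\mathbb{I}_d - F(1))^{-1}\mu_1$, well-defined by the non-singularity in assumption (iii). Then $x^* - F(i)x^* = (\mathbb{I}_d - F(i))(\mathbb{I}_d - F(1))^{-1}\mu_1$, and in particular $x^* - F(1)x^* = \mu_1$, a global maximum of $h_1$; continuity of $h_i$ at $x^* - F(i)x^*$ is automatic and hypothesis \ref{LMB3p11} is assumption (ii). The strict inequality in \ref{LMB3cycle} is trivial when $p_{ij}=0$; otherwise, substituting the Gaussian formula and taking logarithms reduces it precisely to $(\mathbb{I}_d-F(j))(\mathbb{I}_d-F(1))^{-1}\mu_1 \notin H_{ij}$, with the sub-case $p_{11}\sqrt{|\Sigma_j|} > p_{ij}\sqrt{|\Sigma_1|}$ (i.e.\ $H_{ij}=\emptyset$) automatic because the resulting bound on the quadratic form is negative while the quadratic form itself is non-negative. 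This matches assumption (iii) of the corollary verbatim. Reachability \ref{LMB3reach} of $(x^*,1)$ is easy: Gaussians are strictly positive everywhere, so $q(z\mid z')>0$ whenever $p_{ij}>0$, and combining with primitivity of $\mathbb{P}$ yields $P_{z}(Z_{R+1} \in O\times\{1\})>0$ for every $z$ and every open neighbourhood $O$ of $x^*$.

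For (PMM2) I apply Lemma \ref{LMB1B2Lemma} with $C=\Y$: positivity of Gaussians near $0$ gives item 1 of \ref{LMprimit}, assumption (i) gives item 2, and positivity plus primitivity make the reachability condition \ref{LMreach} easy. For Harris recurrence and compact-set regularity I apply Lemma \ref{LMHarris}: $\psi$-irreducibility with full-support $\psi$ again follows from positivity plus primitivity, the moment condition is immediate for Gaussians, and the contraction condition is assumption (iv). Since the set $B_{(n_N-1)}$ produced by Lemma \ref{LMB3} is a bounded open ball, the closure times $\Y$ is compact and hence regular by Lemma \ref{LMHarris}, and regularity descends to the subset $B_{(n_N-1)} \times \Y$. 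All hypotheses of Lemma \ref{PMMLem} then hold, giving \ref{R}-\ref{recurr}. The only genuinely non-routine step is the cyclic inequality \ref{LMB3cycle} for the specific $x^*$; the elaborate ellipsoid $H_{ij}$ appearing in assumption (iii) of the corollary is engineered precisely so that this inequality takes its cleanest form.
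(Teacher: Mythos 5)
Your proposal is correct and follows essentially the same route as the paper: the paper's own proof of this corollary simply cites \cite{PMMinf} for the fact that conditions (i)--(iv) imply the hypotheses of Lemmas \ref{LMB3}, \ref{LMB1B2Lemma} and \ref{LMHarris}, and then concludes via Lemma \ref{PMMLem}. You carry out the delegated verification explicitly (the choice $x^*=(\mathbb{I}_d-F(1))^{-1}\mu_1$ and the reduction of the cyclic inequality to membership in $H_{ij}^{\mathsf{c}}$ are exactly the intended computations), so there is nothing to correct.
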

\begin{proof} It is shown in \cite{PMMinf} that under given conditions the assumptions of Lemmas \ref{LMB3}, \ref{LMB1B2Lemma} and \ref{LMHarris} are fulfilled. The statement now follows from Lemma \ref{PMMLem}.
\end{proof}
\section{Asymptotics of Viterbi training: regeneration-based analysis} \label{sec: VT}
Regenerative property of $(Z,V)$ has numerous potential theoretical applications ranging from SLLN's and CLT's to proving the existence of asymptotic risks. In this section we look at one such application, namely regeneration-based analysis of the Viterbi training algorithm.

\paragraph{Convergence of empirical measures}  Firstly, let us state a general SLLN that provides a theoretical
basis for several regeneration-based inferences, including the ones
in the present section. The proof of the theorem is a rather
straightforward application of the regenerative property of $(Z,V)$
and is given in Appendix \ref{PConvProof}.  An analogous result for
HMM was proven in \cite{AV}. Let $V_{1:n}^n$ denote, the Viterbi path of $X_{1:n}$, i.e. $V_{1:n}^n=v(X_{1:n})$. Here the tie-breaking scheme is assumed lexicographic as above. Whenever \ref{R}-\ref{recurr} hold, let $\{S_k\}_{k \geq 1}$ denote the regeneration times of $(Z,V)$, as constructed in the proof of Theorem \ref{reg}.

\begin{theorem}\label{th:fConv} Let \ref{R}-\ref{recurr} hold and let $f \colon (\Z\times \Y) \times (\Z \times \Y) \rightarrow
\mathbb{R}$ be a measurable function satisfying
\begin{equation}\label{eeldus}
\mathbb{E} \left[ \sum_{k=S_{1}+1}^{S_{2}}
 \max_{i,j \in \Y}| f((Z_{k-1},i),(Z_k,j))| \right]<\infty.
 \end{equation}
 Then
\begin{align*}
\dfrac{1}{n-1}\sum_{k=2}^n f((Z_{k-1},V^n_{k-1}),(Z_k,V^n_k)) \xrightarrow[n]{}
 \dfrac{1}{\E[S_2-S_1]}\mathbb{E}\left[\sum_{k=S_{1}+1}^{S_{2}}
f((Z_{k-1},V_{k-1}),(Z_k,V_k)) \right],\quad{\rm
a.s.}
\end{align*}
\end{theorem}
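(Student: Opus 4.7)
The plan is to reduce the statement to a classical regenerative SLLN applied to the infinite Viterbi process $V$, and then to show that the error from replacing $V^n$ by $V$ is negligible after division by $n$. Let $\{S_k\}_{k \geq 1}$ denote the regeneration times of $(Z,V)$ constructed in Theorem \ref{reg}, let $r$ be the order of the strong $1$-barriers in $\X^*$, and set $M_k := \max_{i,j \in \Y} |f((Z_{k-1}, i), (Z_k, j))|$ and $K(n) := \max\{k \geq 1 \colon S_k + r \leq n\}$.

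The key step is the pathwise identity $V^n_j = V_j$ for every $j \leq S_{K(n)}$. Indeed, by the construction in Theorem \ref{reg} each $S_k$ is a strong $1$-node of order $r$ (condition \ref{R}\ref{Rbar}), so for every $n \geq S_k + r$ every Viterbi path of $X_{1:n}$ is forced to pass through state $1$ at time $S_k$ regardless of tie-breaking; moreover, the piecewise construction makes each piece $V^n_{S_{j-1}:S_j}$ depend only on $X_{S_{j-1}:S_j}$, and lexicographic tie-breaking within each piece makes it agree with the corresponding piece of $V$. Writing
\[
\Delta_n := \Bigl| \sum_{k=2}^n f((Z_{k-1},V^n_{k-1}),(Z_k,V^n_k)) - \sum_{k=2}^n f((Z_{k-1},V_{k-1}),(Z_k,V_k)) \Bigr|,
\]
the only potentially nonzero contributions come from indices $k \in \{S_{K(n)}+1, \ldots, n\}$, so $\Delta_n \leq 2 \sum_{k=S_{K(n)}+1}^n M_k$. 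The separation $S_{k+1} - S_k \geq r+1$ inherited from \eqref{separ} yields $n < S_{K(n)+2}$, hence $\Delta_n \leq 2(T_{K(n)} + T_{K(n)+1})$ with $T_j := \sum_{k=S_j+1}^{S_{j+1}} M_k$. By the regenerative property the $\{T_j\}_{j \geq 1}$ are i.i.d.\ with $\E[T_1] < \infty$ by \eqref{eeldus}; thus $T_n/n \to 0$ a.s.\ (as the successive difference of two sequences converging to $\E[T_1]$), and combined with $K(n)/n \to 1/\E[S_2-S_1]$ a.s.\ from elementary renewal theory this forces $\Delta_n/(n-1) \to 0$ a.s.

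It therefore remains to prove
\[
\frac{1}{n-1} \sum_{k=2}^n f((Z_{k-1},V_{k-1}),(Z_k,V_k)) \xrightarrow[n]{} \frac{1}{\E[S_2-S_1]} \E\Bigl[ \sum_{k=S_1+1}^{S_2} f((Z_{k-1},V_{k-1}),(Z_k,V_k)) \Bigr] \quad \text{a.s.,}
\]
which is a direct application of the classical regenerative SLLN (see e.g.\ \cite{thorisson}) to the process $(Z,V)$, noting that a summand depending on two consecutive values is handled by the usual device of enlarging the state space and that \eqref{eeldus} is (strictly stronger than) the required integrability hypothesis. The main obstacle is the pathwise identification $V^n_j = V_j$ for $j \leq S_{K(n)}$, which crucially exploits both the strong $1$-node property and the lexicographic tie-breaking built into the construction of Theorem \ref{reg}; once this is in hand, the remainder is controlled by at most two full regeneration cycles and the conclusion follows from standard renewal-theoretic manipulations.
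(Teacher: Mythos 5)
Your proposal is correct and follows essentially the same route as the paper's Appendix~A proof: identify $V^n$ with $V$ up to the last regeneration time confirmed as a strong node, bound the remaining boundary terms by the cycle maxima $M_k$ over at most two cycles, and conclude with the regenerative SLLN together with $K(n)/n \to 1/\E[S_2-S_1]$. One small imprecision: the cycle sums $T_j$ (and likewise the cycle functionals of $((Z_{k-1},V_{k-1}),(Z_k,V_k))$) are identically distributed but only $1$-dependent, since adjacent cycles share the boundary point $Z_{S_{j+1}}$ -- the paper flags this explicitly and handles it by splitting into odd and even cycles; your conclusions $T_n/n \to 0$ and the SLLN survive this (e.g.\ via Borel--Cantelli for the former and the odd/even split for the latter), but the blanket ``i.i.d.'' claim should be weakened accordingly.
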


In what follows, we study empirical measures
$$P^n(A) \DEF {1\over n-1}\sum_{k=2}^n  \mathbb{I}_{A }((X_{k-1}, V_{k-1}^n),(X_k, V_k^n)), \quad A \in \B(\Z) \otimes
\B(\Z),$$
where $\mathbb{I}_A$ denotes the indicator function on $A$. Also define measure
\begin{align*}
P^\infty(A)=\dfrac{1}{\E[S_2-S_1]}\mathbb{E}\left[\sum_{k=S_{1}+1}^{S_{2}}
\mathbb{I}_A((X_{k-1},V_{k-1}),(X_k,V_k)) \right], \quad A \in \B(\Z) \otimes \B(\Z).
\end{align*}
From Theorem \ref{th:fConv} it follows that $P^n\Rightarrow P^{\infty}$
a.s., where $\Rightarrow$ stands for the weak convergence of
measures:
%-------------
\begin{cor}\label{corru} Let \ref{R}-\ref{recurr} hold. If a measurable function $g \colon \Z \times \Z \rightarrow
\mathbb{R}$ satisfies (\ref{eeldus}) with $$f((x_1,y_1, v_1),(x_2,y_2,v_2))=g((x_1,v_1),(x_2,v_2)),$$ then $g$ is $P^\infty$-integrable, and
\begin{align} \label{fConv}
\int g  \, dP^n \xrightarrow[n]{} \int g \, dP^\infty, \quad
\mbox{a.s.}
\end{align}
Moreover, $P^n\Rightarrow P^{\infty}$ a.s.
\end{cor}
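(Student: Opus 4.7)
The plan is to obtain (\ref{fConv}) as a direct instance of Theorem \ref{th:fConv}, and then to lift this ``pointwise-in-$g$'' a.s.\ convergence into a.s.\ weak convergence of the empirical measures by a standard separability argument.

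For $P^\infty$-integrability of $g$, I would note that by the definition of $P^\infty$ and monotone convergence,
\begin{equation*}
\int |g|\, dP^\infty \;=\; \frac{1}{\E[S_2-S_1]}\,\E\!\left[\,\sum_{k=S_1+1}^{S_2} |g((X_{k-1},V_{k-1}),(X_k,V_k))|\,\right],
\end{equation*}
which is bounded by $\E[S_2-S_1]^{-1}$ times the quantity appearing in hypothesis (\ref{eeldus}) for the choice $f((z_1,v_1),(z_2,v_2))=g((x_1,v_1),(x_2,v_2))$; finiteness is therefore immediate from the assumption on $g$. For (\ref{fConv}) itself I would then apply Theorem \ref{th:fConv} to this same $f$: its empirical left-hand side is exactly $\int g\, dP^n$ and its limit is exactly $\int g\, dP^\infty$ by the very definitions of the two measures, so the stated a.s.\ convergence follows at once.

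To promote this to $P^n\Rightarrow P^\infty$ a.s., I would exploit separability. Since $\X$ is separable metric and $\Y$ is finite, $\Z\times\Z$ is a separable metric space; on such a space the weak topology on Borel probability measures is metrizable and admits a countable convergence-determining family $\{g_m\}_{m\geq 1}$ of bounded continuous functions (for instance, a countable dense subset of the bounded-Lipschitz functions in the bounded-Lipschitz metric). Each $g_m$ is bounded and hence satisfies (\ref{eeldus}) trivially, because assumption \ref{recurr} guarantees $\E[S_2-S_1]<\infty$. Applying (\ref{fConv}) to each $g_m$ furnishes a probability-one event $\Omega_m$ on which $\int g_m\, dP^n \to \int g_m\, dP^\infty$; the countable intersection $\Omega_0=\bigcap_m \Omega_m$ still has probability one, and on $\Omega_0$ the convergence holds simultaneously for every $g_m$. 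The convergence-determining property of the family then yields $P^n\Rightarrow P^\infty$ on $\Omega_0$.

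The only genuinely non-routine step is this final separability maneuver: without it, the null set on which (\ref{fConv}) fails could \emph{a priori} depend on $g$, which would be incompatible with an almost-sure weak-convergence conclusion. Everything else amounts to rewriting Theorem \ref{th:fConv} in measure-theoretic notation.
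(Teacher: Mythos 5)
Your proposal is correct and follows essentially the same route as the paper: apply Theorem \ref{th:fConv} to $f((x_1,y_1,v_1),(x_2,y_2,v_2))=g((x_1,v_1),(x_2,v_2))$, identify the resulting limit with $\int g\,dP^\infty$ via the standard simple-function/monotone-convergence machinery, and then upgrade to a.s.\ weak convergence by the separability of $\Z\times\Z$ and a countable convergence-determining class of bounded continuous functions (the paper simply cites the proof of Theorem 11.4.1 in Dudley for this last step, which is exactly the argument you spell out).
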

\begin{proof} Denote
\begin{align*}
I(g)=\dfrac{1}{\E[S_2-S_1]}\mathbb{E}\left[\sum_{k=S_{1}+1}^{S_{2}}
g((X_{k-1},V_{k-1}),(X_k,V_k)) \right].
\end{align*}
It follows from Theorem \ref{th:fConv} that
 $$\int g \,  dP^n  \xrightarrow[n]{} I(g),\quad a.s.
$$
-- just take $f((x_1,y_1,v_1),(x_2,y_2,v_2))=g((x_1,v_1),(x_2,v_2))$. To prove \eqref{fConv}, we need to show that $I(g)=\int g \, dP^\infty$. This follows by
the standard machinery of measure theory: if $g$ is a simple
function, then the equality $I(g)=\int g \, dP^\infty$ follows by
definition of $P^\infty$. For non-negative measurable $g$ take a
monotone convergent sequence of simple functions $g_m\nearrow g$ and
use Monotone Convergence Theorem to see that the equality $I(g)=\int
g \, dP^\infty$ holds also in the limit. For arbitrary $g$ we have by \eqref{eeldus} that $I(g^+)< \infty$ and $I(g^-)< \infty$, and hence $I(g)=\int g \, dP^\infty$ follows.

The weak convergence now follows from the standard theory of weak convergence of probability measures, see e.g. proof of \cite[Th. 11.4.1]{dudley}.
\end{proof}

Note that when \ref{R}-\ref{recurr} hold, then there exists $\gamma$ such that
\begin{align*}
\dfrac{1}{n-1} \sum_{k=2}^n \mathbb{I}((V_{k-1}^n,V_k^n)\neq (Y_{k-1},Y_k) ) \xrightarrow[]{n} \gamma , \quad \mbox{a.s.,}
\end{align*}
where $\mathbb{I}$ denotes the indicator function. Indeed, this follows directly from Theorem \ref{th:fConv}, by taking
\begin{align*}
f((x_1,y_1,v_1),(x_2,y_2,v_2))= \begin{cases}1, & \mbox{if $(v_1,v_2)\neq(y_1,y_2)$}\\
0, & \mbox{otherwise}
\end{cases}.
\end{align*}
The limit $\gamma$ is simply the asymptotic pairwise misclassification rate of the Viterbi estimation. For any pair of measures $\nu$ and $\nu'$ on common $\sigma$-field, we denote with $\|\nu -\nu'\|$ the total variation distance of measures $\nu$ and $\nu'$:
\begin{align*}
\|\nu -\nu'\|=\sup_{A}|\nu(A)-\nu'(A)|,
\end{align*}
where supremum is taken over all measurable $A$. Recall (Lemma \ref{lem:HarrisPos}) that when \ref{R}-\ref{recurr} hold, then the transition kernel of $Z$ admits an invariant probability measure $\pi$. Let $P_{\pi}$ denote the probability measure with respect to the stationary initial distribution $\pi$. The following result is useful for our purposes.

\begin{prop} \label{propTV} If \ref{R}-\ref{recurr} hold and $Z$ is aperiodic\footnote{See \cite{MT} for the definition of aperiodicity of general state $\psi$-irreducible Markov chain.}, then $$\|P^\infty - P_{\pi}(Z_{1:2} \in \cdot )\| \leq \gamma.$$
\end{prop}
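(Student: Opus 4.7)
The plan is to compare $P^\infty$ with the corresponding regenerative measure built from the true regime $Y$ rather than the Viterbi process $V$, and then identify the latter with $P_\pi(Z_{1:2} \in \cdot)$ via an ergodic-theoretic argument. Concretely, I would introduce the auxiliary measure
\begin{align*}
Q^\infty(A) = \frac{1}{\E[S_2-S_1]} \E\left[\sum_{k=S_1+1}^{S_2} \mathbb{I}_A(Z_{k-1}, Z_k)\right], \quad A \in \B(\Z) \otimes \B(\Z),
\end{align*}
and prove the claim via the triangle inequality
\begin{align*}
\|P^\infty - P_\pi(Z_{1:2} \in \cdot)\| \leq \|P^\infty - Q^\infty\| + \|Q^\infty - P_\pi(Z_{1:2} \in \cdot)\|,
\end{align*}
showing that the first term is bounded by $\gamma$ and the second vanishes.

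The first bound, $\|P^\infty - Q^\infty\| \leq \gamma$, is essentially a pointwise comparison of indicators. For any $A$, the difference $|\mathbb{I}_A((X_{k-1}, V_{k-1}), (X_k, V_k)) - \mathbb{I}_A(Z_{k-1}, Z_k)|$ vanishes whenever $(V_{k-1}, V_k) = (Y_{k-1}, Y_k)$, so it is dominated by $\mathbb{I}((V_{k-1}, V_k) \neq (Y_{k-1}, Y_k))$. Summing this estimate over $k \in \{S_1+1, \ldots, S_2\}$, taking expectation, dividing by $\E[S_2-S_1]$, and taking the supremum over $A$ produces $|P^\infty(A) - Q^\infty(A)| \leq \gamma$, which yields the total-variation bound.

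For the identification $Q^\infty = P_\pi(Z_{1:2} \in \cdot)$, the plan is to recognize both measures as the almost-sure limit of the empirical pair-measure $\frac{1}{n-1}\sum_{k=2}^n \mathbb{I}_A(Z_{k-1}, Z_k)$. Applying Theorem \ref{th:fConv} with $f((z_1, v_1), (z_2, v_2)) = \mathbb{I}_A(z_1, z_2)$ (which trivially satisfies \eqref{eeldus} since it is bounded by $1$) gives convergence of this empirical average to $Q^\infty(A)$. On the other hand, by Lemma \ref{lem:HarrisPos} the chain $Z$ is positive Harris recurrent, and under the aperiodicity assumption the paired chain $\{(Z_{k-1}, Z_k)\}_{k \geq 2}$ is itself positive Harris recurrent with invariant probability measure $\pi(dz_1)P(z_1, dz_2) = P_\pi(Z_{1:2} \in \cdot)$. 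The standard ergodic theorem for Harris chains then yields $\frac{1}{n-1}\sum_{k=2}^n \mathbb{I}_A(Z_{k-1}, Z_k) \to P_\pi(Z_{1:2} \in A)$ a.s., and equating the two limits forces $Q^\infty(A) = P_\pi(Z_{1:2} \in A)$ for every measurable $A$.

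The main obstacle I anticipate is the clean identification of the invariant measure of the paired chain and the careful invocation of the ergodic theorem for an arbitrary (rather than stationary) initial distribution; this is where positive Harris recurrence of $Z$ together with aperiodicity enters, ensuring both the uniqueness of $\pi$ and that the almost-sure ergodic limits are deterministic and independent of the starting point. Once those facts are in hand, combining the two parts via the triangle inequality closes the argument.
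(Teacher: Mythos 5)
Your proposal is correct and follows essentially the same route as the paper: your auxiliary measure $Q^\infty$ is exactly the paper's $\eta$, the bound $\|P^\infty-Q^\infty\|\leq\gamma$ is obtained by the same pointwise comparison of indicators over a regeneration cycle, and the identification $Q^\infty=P_\pi(Z_{1:2}\in\cdot)$ rests on the same ergodic theory of positive Harris chains (via Lemma \ref{lem:HarrisPos}). The only immaterial difference is in that last step, where the paper works on product sets $A\times B$ and passes through convergence of the expectations $\E[I_k(A\times B)]$ (aperiodic ergodic theorem plus dominated convergence applied to the almost-sure regenerative limit), whereas you invoke an almost-sure ergodic theorem for the paired chain $\{(Z_{k-1},Z_k)\}$ directly and equate the two deterministic limits.
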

\begin{proof}Note that by  Theorem \ref{th:fConv}
\begin{align} \label{gamEq}
\gamma=\dfrac{1}{\mathbb{E}[S_2-S_1]}\mathbb{E}\left[\sum_{k=S_1+1}^{S_2} \mathbb{I}((V_{k-1},V_k)\neq  (Y_{k-1},Y_k) ) \right]
\end{align}
Similarly, denoting
\begin{align*}
I_k(A)=\mathbb{I}_{A}((X_{k-1},Y_{k-1}),(X_{k},Y_k)), \quad A \in   \B(\Z) \otimes \B(\Z),
\end{align*}
we have
\begin{align} \label{InConv}
\dfrac{1}{n-1}\sum_{k=2}^n I_k(A) \xrightarrow[n]{} \eta(A), \quad \mbox{a.s.,}
\end{align}
where $\eta$ is a probability measure defined by
\begin{align*}
\eta(A)=\dfrac{1}{\mathbb{E}[S_2-S_1]}\mathbb{E}\left[\sum_{k=S_1+1}^{S_2} I_k(A) \right], \quad A \in \B(\Z) \otimes \B(\Z).
\end{align*}
Now we have for every $A \in \B(\Z) \otimes \B(\Z)$
\begin{align*}
\big|\mathbb{I}_{A}((X_{k-1},V_{k-1}),(X_{k},V_k))-I_k(A)\big| \leq \mathbb{I}((V_{k-1},V_k)\neq  (Y_{k-1},Y_k) ), \quad k \geq 1,
\end{align*}
and so by \eqref{gamEq}
\begin{align*}
P^\infty(A) -\eta(A)|&=\dfrac{1}{\mathbb{E}[S_2-S_1]} | \mathbb{E}\left[ \sum_{k=S_1+1}^{S_2} [\mathbb{I}_{A}((X_{k-1},V_{k-1}),(X_{k},V_k))-I_k(A)]\right] | \\
&\leq \gamma.
\end{align*}
By Lemma \ref{lem:HarrisPos} \ref{R}-\ref{recurr} imply that $Z$ is Harris recurrent and positive. To prove the statement, it suffices to show that $\eta=P_{\pi}(Z_{1:2}\in \cdot)$. From the ergodic theorem for aperiodic and positive Harris Markov chains \cite[Th. 13.3.3.]{MT} it follows that for all $A , B \in \B(\Z)$ it holds $\E [I_k(A \times B)] \xrightarrow[k]{} P_{\pi}(Z_{1:2}\in A \times B)$, which implies
\begin{align*}
\dfrac{1}{n-1}\sum_{k=2}^n \E [I_k(A \times B)] \xrightarrow[n]{} P_{\pi}(Z_{1:2}\in A \times B).
\end{align*}
Thus applying Dominated Convergence Theorem to \eqref{InConv} yields $\eta(A \times B)=P_{\pi}(Z_{1:2}\in A \times B)$.
\end{proof}

\paragraph{Viterbi training} The empirical
measures $P^n$ are central in many Viterbi-inferences based
applications. In this section we look at one of such applications,
namely Viterbi training.

For any application of PMM one needs to estimate the parameters of the model. The standard go-to method of parameter estimation in case of Markov switching models is the \textit{EM-algorithm}, also known as \textit{Baum-Welch} or \textit{forward-backward} algorithm \cite{HMMbook,EM1,EM4,MSM1}. In practice the EM-algorithm is often replaced with the \textit{Viterbi training} algorithm (VT) \cite{VT1,VT2,VT3} (also known as \textit{segmental K-means algorithm} \cite{Kmeans1,Kmeans2} and \textit{classification EM} in the mixture case \cite{classEM1,classEM2}) due to its ease of implementation and significantly smaller computational cost. However, compared to the EM-algorithm, VT can sometimes lead to sub-optimal results. It could be argued that the closer the Viterbi path is to the real path, the better VT performs. Thus VT is expected to work well when the dimension of observation space $\X$ is relatively high compared to the number of hidden states, since the Viterbi path is in that case -- in some sense -- probably very similar to the real hidden path. By the same logic, replacing a HMM with a more general PMM (for instance the linear Markov switching model) can improve the accuracy of the Viterbi path estimation and therefore also that of VT. We will see that asymptotic analysis of VT is made possible by the convergence of empirical
measures $P^n$ and, therefore, by regenerativity of $(Z,V)$.

We start by formulation of VT for PMM's. We assume that transition densities $q(z|z')$ are known up to a parametrization $q(z|z')=q(z|z';\theta)$, where $\theta$ is the unknown ``true'' parameter vector belonging to parameter space $\Theta \subset \mathbb{R}^d$, $d \geq 1$. For any parameter vector $\theta' \in \Theta$ denote with $v(x_{1:n};\theta')$ the Viterbi path of $x_{1:n}$ obtained by using the parameters $\theta'$. Hence $v(x_{1:n})=v(x_{1:n}; \theta)$. Viterbi training is a method of estimating the unknown parameters using the following algorithm.
\bigskip

{\noindent \bf Viterbi training algorithm:}
\begin{enumerate}
\item Choose $^0\theta \in \Theta$, the initial values for the parameter vector, and set $l=0$.
\item Using current parameters ${}^l\theta$, obtain the corresponding Viterbi path ${}^lv={}^lv_{1:n}=v(x_{1:n};{}^l\theta)$.
\item Define empirical measure
\begin{align*}
P^n(A; x_{1:n}, {}^l\theta)=\dfrac{1}{n-1}\sum_{k=2}^n \mathbb{I}_{A }((x_{k-1},{}^lv_{k-1}),(x_k,{}^lv_k)), \quad
 A \in \B(\Z) \otimes \B(\Z)
\end{align*}
and update the parameters ${}^l\theta$ by taking
\begin{align*}
{}^{l+1}\theta&=\argmax_{\theta' \in \Theta} \int \ln q(z|z'; \theta') \, P^n(d(z',z); x_{1:n}, {}^l\theta).
\end{align*}
\item Put $l=l+1$ and repeat from step 2.
\end{enumerate}

As a special case, we can consider the Markov switching model defined by transition kernel density $p_{ij}f_{j}(x|x';\alpha_{j})$, where $\theta=(p_{ij},\alpha_{j})_{i, j \in \Y}$ is the unknown parameter vector. Then (as one can see by applying Lagrange multipliers method) the estimates of ${}^{l+1}p_{ij}$ of $p_{ij}$ at the 3rd step of Viterbi training algorithm can simply be calculated by
\begin{align*}
{}^{l+1}p_{ij} = \begin{cases} \dfrac{\sum_{k=2}^n \mathbb{I}_{\{(i,j)\}}({}^lv_{k-1},{}^lv_k)}{\sum_{k=2}^n \mathbb{I}_{\{i\}}({}^lv_{k-1})}, & \mbox{if $\sum_{k=2}^n \mathbb{I}_{\{i\}}({}^lv_{k-1})>0$},\\
{}^{0}p_{ij}, & \mbox{otherwise}
\end{cases}.
\end{align*}
Then also ${}^{l+1}\alpha_{j}$, the estimates of $\alpha_{j}$ at the 3rd step of Viterbi training algorithm, can be calculated separately by
\begin{align*}
{}^{l+1}\alpha_{j}&=\argmax_{\alpha \in \mathcal{A}_{j}} \sum_{k=2}^n \mathbb{I}_{\{j\}}(^l v_k) \ln f_{j}(x_k|x_{k-1}; \alpha).
\end{align*}
where $\mathcal{A}_{j}$ is the parameter space corresponding to $\alpha_{j}$. Recall that Markov switching model also includes HMM in which case $f_j(x|x'; \alpha_j)$ is independent of $x'$.

%\begin{lemma} Let \ref{R}-\ref{recurr} hold. Measure $P^\infty$ is dominated by $(\mu \times c) \times (\mu \times c)$, where $c$ is counting measure on $2^\Y$.
%\end{lemma}
%\begin{proof} Let $A \in \B(\Z) \otimes \B(\Z)$ be such that
%\begin{align} \label{nullMeas}
%(\mu \times c) \times (\mu \times c)(A)=0.
%\end{align}
%We want to show that $P^\infty(A)=0$. Let
%\begin{align*}
%A_{ij}=\{(x',x) \: | \: ((x',i),(x,j)) \in A\}, \quad i,j \in \Y.
%\end{align*}
%Then \eqref{nullMeas} implies
%\begin{align*}
%\mu \times \mu (A_{ij})=0, \quad i,j \in \Y.
%\end{align*}
%Thus
%\begin{align*}
%P((X,V) \in A \mbox{ i.o.}) &\leq \sum_{i,j \in \Y} \sum_{k=3}^\infty P((X_{k-1},X_k) \in A_{ij})\\
%& \leq \sum_{i,j \in \Y} \sum_{k=3}^\infty \int_{A_{ij}}p(x_{k-1},x_k) \, \mu \times \mu(d(x_{k-1},x_k))\\
%&=0.
%\end{align*}
%This in turn implies $P^\infty(A)=0$, since by regenerativity, as can be seen from the proof of Theorem \ref{PConv}, $P^\infty(A)>0$ would imply $P((X,V) \in A \mbox{ i.o.})=1$.
%\end{proof}
%The density of $P^\infty$ with respect to measure $(\mu \times c) \times (\mu \times c)$ will be denoted with $p^\infty(z_1,z_2)$ and the marginal densities with $p^\infty(z_1)$ and $p^\infty(z_2)$ as usual.

Assume for now that \ref{R}-\ref{recurr} hold and $Z$ is aperiodic. In the light of Theorem \ref{th:fConv} we may want to seek the answer to the following question: what is the output of the 3rd step of the Viterbi training algorithm if the input is the correct parameter $\theta$? Ideally, we would like to see that that the output is in that case close to the input: $\theta={}^l\theta \approx {}^{l+1}\theta $. This is called the \textit{fixed point property}. Write $q(\theta')$ for $(z',z) \mapsto q(z|z';\theta')$. Assuming that $q(\theta')$ is positive everywhere on $\Z^2$ and $\ln q(\theta')$ satisfies \eqref{eeldus} we have by Corollary \ref{corru} that
\begin{align*}
\int \ln q(\theta') \,  dP^n(\cdot \, ; X_{1:n}, \theta) \xrightarrow[n]{} \int \ln  q(\theta') \, dP^\infty , \quad \mbox{a.s.}
\end{align*}
This gives hope that at least under some conditions
\begin{align} \label{argMConv}
\argmax_{\theta' \in \Theta}\int \ln  q(\theta') \,  dP^n(\cdot \, ; X_{1:n}, \theta) \xrightarrow[n]{} \argmax_{\theta' \in \Theta}\int \ln q(\theta') \,    dP^\infty, \quad \mbox{a.s.}
\end{align}
In \cite{AVacta} an analogous convergence is proved under certain conditions. Although \cite{AVacta} deals with HMM's only, we believe that the same idea can be applied to PMM as well to obtain the convergence \eqref{argMConv}. Denote
\begin{align*}
\theta^*(\theta)=\argmax_{\theta' \in \Theta}\int \ln q(\theta') \,    dP^\infty.
\end{align*}
Assuming that \eqref{argMConv} holds and that the sample size $n$ is sufficiently large, the fixed point property holds when $\theta^* \approx \theta$. It is not difficult to confirm that $P^\infty$ has a density with respect to measure $(\mu \times c)\times (\mu \times c)$, where $c$ denotes counting measure on $2^\Y$. Let $p'(z',z)$ denote this density.  Writing $d(z',z)$ in place of $(\mu \times c)\times (\mu \times c)(d(z',z))$ we have
\begin{align*}
\theta^*= \argmax_{\theta' \in \Theta}\int p'(z',z) \ln q(z|z'; \theta') \,    d(z',z).
\end{align*}
It is not difficult to confirm that the invariant measure $\pi$ has a density with respect to measure $\mu \times c$ -- denote this density with $p_{\pi}$. Assuming that $\ln p'(z',z)$ and $(z',z) \mapsto \ln p_{\pi}(z')$ are both $P^\infty$-integrable, we have
\begin{align}
\theta^* &= \argmax_{\theta' \in \Theta}\int p'(z',z) [\ln q(z|z'; \theta') + \ln p_{\pi}(z')- \ln p'(z',z)] \,    d(z',z) \notag\\
&= \argmin_{\theta' \in \Theta} \int p'(z',z)  \ln \left(\dfrac{p'(z', z)}{p_{\pi}(z')q(z|z'; \theta')}\right)\,    d(z',z). \label{argminKL}
\end{align}
Defining measure $Q_{\theta'}$ by
\begin{align*}
Q_{\theta'}(A)=\int_A p_{\pi}(z')q(z|z'; \theta') \, d(z',z), \quad A \in \B(\Z) \otimes \B(\Z),
\end{align*}
we can see that \eqref{argminKL} minimizes the Kullback-Leibler divergence of $P^\infty$ from $Q_{\theta'}$. If the Viterbi path is perfect estimate of the hidden path, i.e. $V \equiv Y$, then by Proposition \ref{propTV} $P^\infty =P_\pi(Z_{1:2} \in \cdot)=Q_{\theta}$. The Kullback-Leibler divergence of $Q_{\theta}$ from $Q_{\theta'}$ is minimal when $\theta'=\theta$, implying that the fixed point property $\theta^* \approx \theta$ indeed holds when the Viterbi path is perfect, at least under appropriate identifiability conditions. In practice however the Viterbi path is almost never perfect and so whether $\theta^* \approx \theta$ holds or not depends on how closely $P^\infty$ resembles $P_\pi(Z_{1:2} \in \cdot)$. If the pairwise misclassification rate $\gamma$ is small, then by Proposition \ref{propTV} $\|P^\infty - P_{\pi}(Z_{1:2}\in \cdot)\| \leq \gamma$, and so the fixed point property is in that case expected to hold.

If $P^\infty$ is not close to $P_\pi(Z_{1:2} \in \cdot)$, we can correct the estimation error at the 3rd step of the VT by taking the estimate ${}^{l+1}\theta$ to be
\begin{align*}
{}^{l+1}\theta &=\theta- \theta^*(\theta)+\argmax_{\theta' \in \Theta} \int \ln q(z|z'; \theta') \, P^n(d(z',z); X_{1:n}, {}^l\theta).
\end{align*}
In case of HMM this modified version of the VT has been called the \textit{adjusted Viterbi training} (AV) \cite{AV,Eng,AVacta}. AV ensures that the fixed point property holds simply by virtue of convergence \eqref{argMConv}. The simulations in \cite{Eng,AVacta} show how the fixed point property significantly
improves the accuracy of the estimator so that AV outperforms VT and
is comparable to EM-algorithm. The key component of AV is the map $\theta \mapsto \theta^*(\theta)$. This map is generally not known analytically and should therefore be estimated by simulations. The estimation takes some time and effort but could be worthwhile if a single model is to be applied repeatedly.

\paragraph{Simulations} Based on the arguments above it should be believable that the pairwise misclassification rate $\gamma$ provides a fairly adequate measure on how well VT performs in practice (smaller values of $\gamma$ indicating a better performance). This rate also has the advantage of being easy to calculate in simulations. Recall now the linear Markov switching defined by \eqref{LMSM}. We consider the case where $|\Y|=2$, $\X = \mathbb{R}$, $F(1)=0$, $p_{11}=p_{22}=0.9$, $h_1$ is standard normal and $h_2$ is normal with mean $\mu$ and standard deviation 1. It follows from Corollary \ref{GLMcor} that \ref{R}-\ref{recurr} hold whenever $\mu \neq 0$ and $|F(2)|<1 \frac{1}{9}$.

Figure \ref{cont} depicts the estimated value of $\gamma$ depending on $\mu$ and $F(2)$. In case of $\mu=F(2)=0$ Viterbi path is constantly 1 (due to lexicographic ordering) and all non-$(1,1)$ pairs, which constitute slightly more than half of all the pairs, are misclassified. As $\mu$ and $F(2)$ increase, the misclassification rate $\gamma$ decreases. The case $F(2)=0$ corresponds to HMM. From Figure
\ref{cont} we see that as $F(2)$ increases, i.e. the
model ``moves away from HMM'', $\gamma$ decreases.

\begin{figure}
    \centering
        \includegraphics[width=0.8\textwidth, trim={0 0.5cm 0 0.5cm},clip]{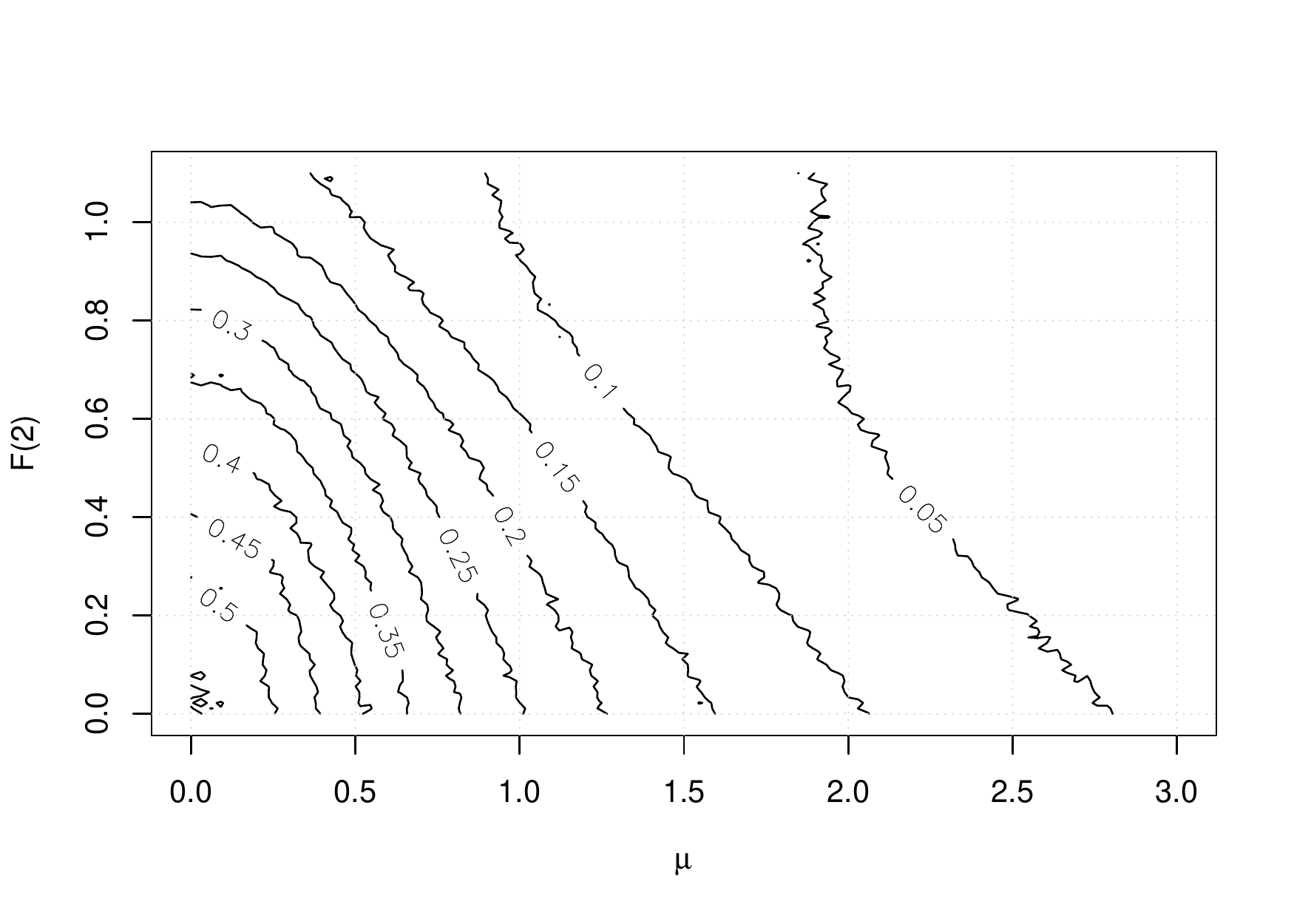}
    \caption{Estimated value of $\gamma$ depending on model parameters $\mu$ and $F(2)$.} \label{cont}
\end{figure}
Let $(X_1^\infty, V_1^\infty, X_2^\infty,V_2^\infty)$ be a random vector having distribution $P^\infty$. Figure \ref{2dPlot} depicts the densities of $P(X_{1:2}^\infty \in \cdot | V^\infty_{1:2}=(i,j))$, drawn with solid line, and $P_\pi(X_{1:2} \in \cdot| Y_{1:2}=(i,j))$, drawn with dotted line. The upper left, upper right, bottom left and bottom right plot represent pairs $(1,1), (1,2), (2,1), (2,2)$, respectively. In Figure \ref{2d1} $F(2)=0$ (i.e. the model is HMM) and $\mu=1$. In Figure \ref{2d2} $F(2)=\frac{1}{2}$ and $\mu=2$. In the HMM-case the density of $P_\pi(X_{1:2} \in \cdot| Y_{1:2}=(i,j))$ is $h_i(x_1)h_j(x_2)$ and can thus be directly calculated; all other densities are estimated by simulation. In the HMM-case (Figure \ref{2d1}) the difference of densities $P(X_{1:2}^\infty \in \cdot | V^\infty_{1:2}=(i,j))$ and $P_\pi(X_{1:2} \in \cdot| Y_{1:2}=(i,j))$ seem to be insignificant for $(i,j)=(1,1),(2,2)$ and greater for $(i,j)=(1,2),(2,1)$. In the PMM-case (Figure \ref{2d2}) the difference between two densities is still very small for $(i,j)=(1,1),(2,2)$ while for pairs $(i,j)=(1,2),(2,1)$ the densities appear to be closer than in the HMM-case. This accords with the fact that in the HMM-case -- as can be seen from Figure \ref{cont} -- the value of $\gamma$ is about 0.25 while in the PMM-case $\gamma$ is much smaller -- roughly about 0.05.
\begin{figure}
    \centering
    \begin{subfigure}[b]{0.47\textwidth}
        \includegraphics[width=\textwidth]{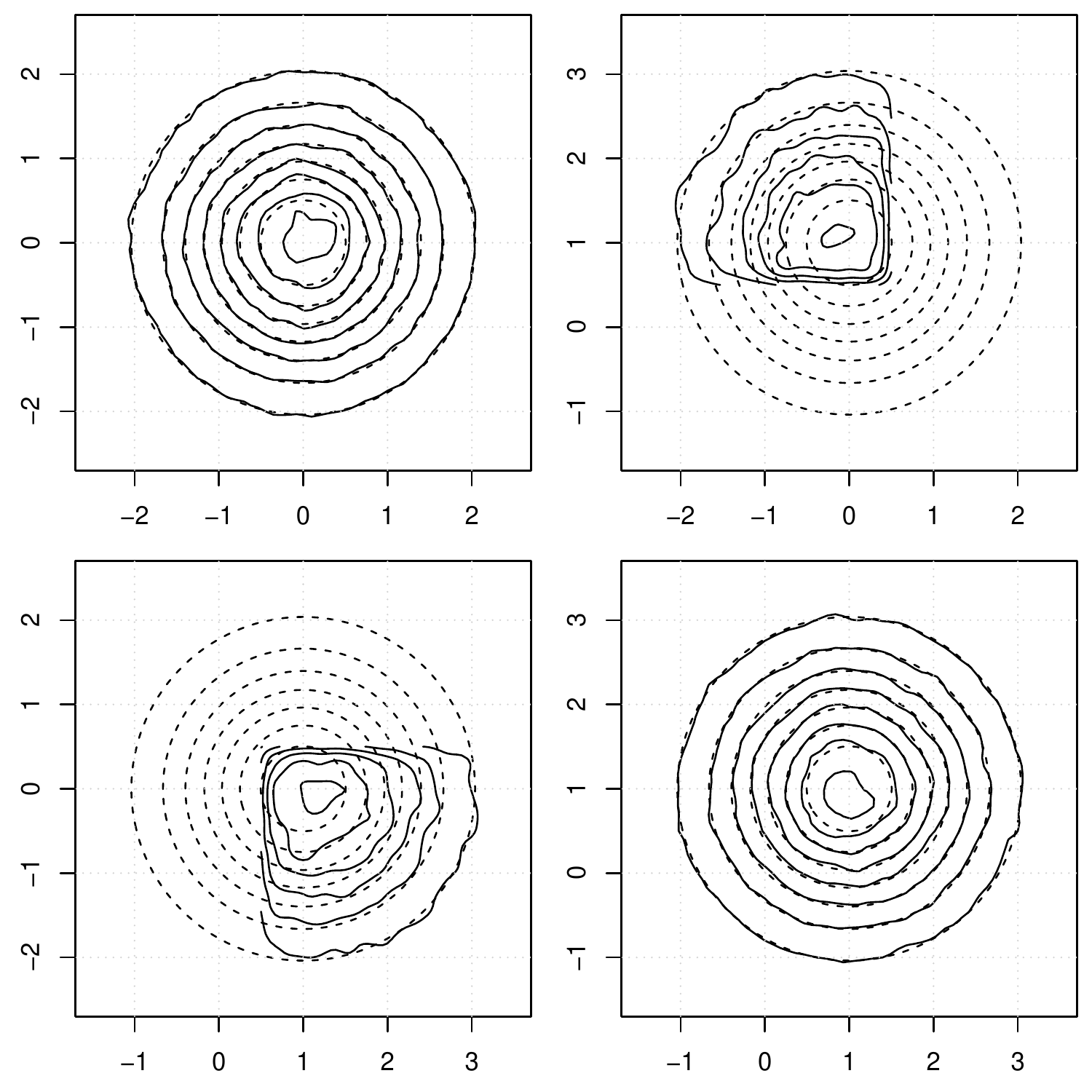}
        \caption{} \label{2d1}
    \end{subfigure}
    ~~
    \begin{subfigure}[b]{0.47\textwidth}
        \includegraphics[width=\textwidth]{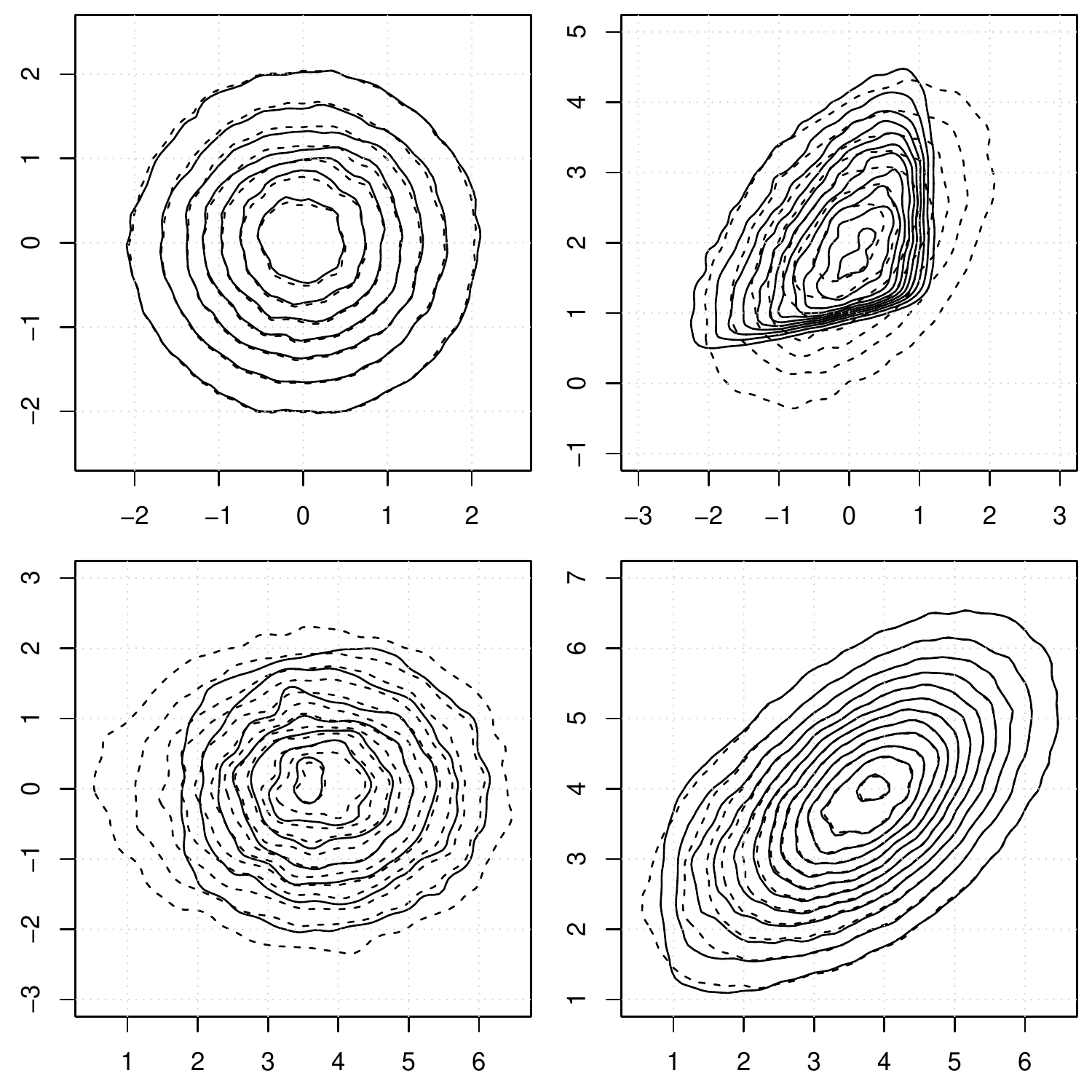}
        \caption{} \label{2d2}
    \end{subfigure}
    \caption{Densities of $P(X_{1:2}^\infty \in \cdot | V^\infty_{1:2}=(i,j))$, drawn with solid line, and $P_\pi(X_{1:2} \in \cdot| Y_{1:2}=(i,j))$, drawn with dotted line. The upper left, upper right, bottom left and bottom right plot represent pairs $(1,1), (1,2), (2,1), (2,2)$, respectively. In (a) $F(2)=0$ (i.e. the model is HMM) and $\mu=1$. In (b) $F(2)=\frac{1}{2}$ and $\mu=2$.} \label{2dPlot}
\end{figure}
\begin{appendices}
\section{Proof of Theorem \ref{th:fConv}} \label{PConvProof} Let $f \colon (\Z\times \Y) \times (\Z \times \Y) \rightarrow \mathbb{R}$ be a measurable function satisfying \eqref{eeldus}. Denote
\begin{align*}
f^n(k)=f((Z_{k-1},V^n_{k-1}), (Z_k ,V^n_k)), \quad k \geq 2, \quad n \geq k,
\end{align*}
and
\begin{align*}
f(k)=f((Z_{k-1},V_{k-1}), (Z_k ,V_k)), \quad k \geq 2.
\end{align*}
We need to show that
\begin{align*}
\dfrac{1}{n-1}\sum_{k=2}^n f^n(k) \xrightarrow[n]{} \dfrac{1}{\mathbb{E}[S_2-S_1]} \E \left[\sum_{k=S_1+1}^{S_2}f(k) \right], \quad \mbox{a.s.}
\end{align*}
Denote $K(n)=\max\{k \: | \: S_k \leq n\}$. The regeneration times $S_k$ and Viterbi process $V$ are constructed in such a way that $V^n_{1:S_{K(n)-1}} = V_{1:S_{K(n)-1}}$  for each $n \geq S_2$. Thus for $n \geq S_3$ we have
\begin{align} \label{3terms}
\dfrac{1}{n-1}\sum_{k=2}^n f^n(k)=\dfrac{1}{n-1} \sum_{k=2}^{S_1} f(k) + \dfrac{1}{n-1} \sum_{k=S_1+1}^{S_{K(n)-1}} f(k) +\dfrac{1}{n-1} \sum_{k=S_{K(n)-1}+1}^{n} f^n(k).
\end{align}
We will take a closer look at the three terms on the right side of \eqref{3terms}. Since $S_1< \infty$ a.s., then the first term converges to zero (a.s.) as $n \rightarrow \infty$.

Next we show that the third term converges to zero as well. Denote
\begin{align*}
g(l)=\sum_{k=S_{l}+1}^{S_{l+1}}\max_{i,j \in \Y}|f((Z_{k-1},i), (Z_k,j))|, \quad l \geq 1.
\end{align*}
The regeneration times $S_k$ split $(Z,V)$ into i.i.d. cycles, so $\{g(2l-1)\}_{l \geq 1}$ and $\{g(2l)\}_{l \geq 1}$ are both i.i.d. ($\{g(l)\}_{l \geq 1}$ is not generally i.i.d. because $g(l)$ and $g(l+1)$ are not independent). By \eqref{eeldus} $\mathbb{E}[g(1)]< \infty$, so by SLLN we have
\begin{align*}
\dfrac{1}{l}\sum_{k=1}^l g(k)&= \dfrac{\ceil{\frac{l}{2}}}{l} \dfrac{1}{\ceil{\frac{l}{2}}} \sum\limits_{\substack{k=1, \ldots,l \\ \mbox{\scriptsize{$k$ is odd}}}}g(k) +\dfrac{\floor{\frac{l}{2}}}{l} \dfrac{1}{\floor{\frac{l}{2}}}\sum\limits_{\substack{k=1, \ldots,l \\ \mbox{\scriptsize{$k$ is even}}}}g(k)  \xrightarrow[l]{\mbox{\scriptsize{a.s.}}} \dfrac{1}{2}\E[g(1)] +\dfrac{1}{2}\E[g(1)]\\
&= \E[g(1)].
\end{align*}
This implies that $\frac{1}{l}g(l)$ converges to zero as $l$ increases. Since $K(n)\nearrow \infty$, a.s.,  we have
\begin{align*}
\dfrac{1}{n-1} \sum_{k=S_{K(n)-1}+1}^{n} f^n(k) \leq \dfrac{1}{K(n)-1}(g(K(n)-1) +g(K(n)) \xrightarrow[n]{\mbox{\scriptsize{a.s.}}} 0,
\end{align*}
as claimed.

Denote
\begin{align*}
h(l)=\sum_{k=S_{l}+1}^{S_{l+1}}f((Z_{k-1},V_{k-1}), (Z_k,V_k))|, \quad l \geq 1.
\end{align*}
It remains to show that the second term on the right side of \eqref{3terms} converges to $\frac{1}{\E[S_2-S_1]}\mathbb{E}[h(1)]$. We have by SLLN almost surely
\begin{align*}
\lim_n \dfrac{S_{K(n)}}{K(n)} = \lim_n \dfrac{S_{K(n)+1}}{K(n)}= \E[S_2-S_1],
\end{align*}
and so inequalities $S_{K(n)} \leq n \leq S_{K(n)+1}$ imply convergence
\begin{align*}
\dfrac{n}{K(n)} \xrightarrow[n]{} \E[S_2-S_1], \quad \mbox{a.s.}
\end{align*}
Note that $\E[h(1)] \leq \E[g(1)]< \infty$. By regenerativity $\{h(2l-1)\}_{l \geq 1}$ and $\{h(2l)\}_{n \geq 1}$ are both i.i.d., so by SLLN
\begin{align*}
 \dfrac{1}{n-1} \sum_{k=S_1+1}^{S_{K(n)-1}} f(k) &=
 \dfrac{\ceil{\frac{K(n)-2}{2}}}{n-1} \dfrac{1}{\ceil{\frac{K(n)-2}{2}}}\sum\limits_{\substack{k=1, \ldots,K(n)-2 \\ \mbox{\scriptsize{$k$ is odd}}}} h(k)
  \quad +
 \dfrac{\floor{\frac{K(n)-2}{2}}}{n-1} \dfrac{1}{\floor{\frac{K(n)-2}{2}}}\sum\limits_{\substack{k=1, \ldots,K(n)-2 \\ \mbox{\scriptsize{$k$ is even}}}} h(k) \\
& \xrightarrow[n]{\mbox{\scriptsize{a.s.}}} \dfrac{1}{2\E[S_2-S_1]}\E[h(1)] +\dfrac{1}{2\E[S_2-S_1]} \E[h(1)]\\
&=\dfrac{1}{\E[S_2-S_1]}\E[h(1)].
\end{align*}
\end{appendices}

\bibliographystyle{abbrvnat}
\bibliography{Reg}
\end{document}